\newcommand{\wt}[1]{\widetilde{#1}}
\newcommand{\Var}{\text{Var}}
\newcommand{\Prob}{\mathbb{P}}
\newcommand{\te}[1]{\text{\emph{#1}}}
\newcommand{\bbm}{\begin{bmatrix}}
\newcommand{\ebm}{\end{bmatrix}}
\newcommand{\ps}{\phantom{}}
\theoremstyle{plain}
\newtheorem{theorem}{Theorem}[section]
\newtheorem{lemma}[theorem]{Lemma}
\newtheorem{corollary}[theorem]{Corollary}
\theoremstyle{definition}
\newtheorem{definition}[theorem]{Definition}
\newtheorem{condition}[theorem]{Condition} 
\theoremstyle{remark}
\newtheorem{remark}[theorem]{Remark}
\icmltitlerunning{Structure Learning of Latent Factors via Clique Search}
\begin{document}

\twocolumn[
\icmltitle{Structure Learning of Latent Factors via\\
           Clique Search on Correlation Thresholded Graphs}



\icmlsetsymbol{equal}{*}

\begin{icmlauthorlist}
\icmlauthor{Dale S. Kim}{ucla}
\icmlauthor{Qing Zhou}{ucla}
\end{icmlauthorlist}

\icmlaffiliation{ucla}{Department of Statistics and Data Science, University of California, Los Angeles, United States}

\icmlcorrespondingauthor{Qing Zhou}{zhou@stat.ucla.edu}

\icmlkeywords{Machine Learning, ICML, Factor Analysis, Structure Learning, Correlation Thresholding, Maximal Cliques}

\vskip 0.3in
]



\printAffiliationsAndNotice{}  

\begin{abstract}
Despite the widespread application of latent factor analysis, existing methods suffer from the following weaknesses: requiring the number of factors to be known, lack of theoretical guarantees for learning the model structure, and nonidentifiability of the parameters due to rotation invariance properties of the likelihood.
We address these concerns by proposing a fast correlation thresholding (CT) algorithm that simultaneously learns the number of latent factors and a rotationally identifiable model structure.
Our novel approach translates this structure learning problem into the search for so-called independent maximal cliques in a thresholded correlation graph that can be easily constructed from the observed data.
Our clique analysis technique scales well up to thousands of variables, while competing methods are not applicable in a reasonable amount of running time.
We establish a finite-sample error bound and high-dimensional consistency for the structure learning of our method.
Through a series of simulation studies and a real data example, we show that the CT algorithm is an accurate method for learning the structure of factor analysis models and is robust to violations of its assumptions.
\end{abstract}

\section{Introduction} \label{sec:introduction}

Factor analysis is a commonly used multivariate technique which conceptualizes observed variables as a function of unobserved latent factors.
Methods and discussions have appeared in a variety of fields, particularly the social sciences, such as psychology \citep{Reise2000}, sociology \citep{Werts1973}, education \citep{Schreiber2006}, and epidemiology \citep{Martinez1998}.
It is generally assumed that the number of latent factors is less than the number of observed variables, hence serving as a dimension reduction procedure in this sense.

To learn the parameters of factor analysis models, three problems must be addressed: (1) the number of latent factors must be determined, (2) the support of the coefficients must be found, and (3) a unique solution must be determined from rotationally equivalent parameters.
Prior work on learning factor analysis models typically use a constraint-based or a score-based approach.
Constraint-based methods involve analyzing permutations of correlations and partial correlations among the observed variables for constraints that would be implied by potential models \citep{Scheines1998, Silva2006}.
However, we note that the focus of these algorithms is to construct equivalence classes of possible models and can be computationally demanding.
In contrast, our goal is to develop efficient methods for learning and estimating a single model output in this work.

Score-based methods are generally more amenable to single model outputs.
Traditional Exploratory Factor Analysis (EFA) typically maximizes the likelihood, restricting the latent factors to be orthogonal.
An oblique factor solution can be extracted by rotating the orthogonal solution, subject to the model constraints.
There are numerous procedures for such rotations, which typically yield different solutions \citep[for a review of such methods see][]{Browne2001}.
After rotation, additional structure may be learned by setting small elements of $\Lambda$ to zero if they are below an ad-hoc threshold \citep{Ford1986, Howard2016}.
The major criticisms of EFA are the subjective use of these rotation criteria and thresholding steps, and requiring the number of latent factors to be known a priori.

As a potential solution to these problems in EFA, penalized methods also have been developed.
Most relevant to oblique factor analysis models are adding LASSO \citep{Tibshirani1996} and MCP \citep{Zhang2010} penalties to the likelihood, which were developed by \citet*{Hirose2014b}.
Instead of rotating factor coefficients after maximizing the likelihood, penalized EFA can achieve sparse solutions by directly maximizing a penalized likelihood.
This requires the use of tuning parameters, followed by model selection with the Bayesian Information Criterion (BIC) or cross-validation \citep[CV;][]{Scharf2019}.
However, a search over a large set of tuning parameters is often computationally intense.
Furthermore, the number of latent variables is still required as an input, and theoretical guarantees for rotational identifiability and structural estimation consistency have yet to be established.

For choosing the number of latent factors, many ad-hoc methods have been suggested, but suffer from poor performance, poor theoretical motivation, or both.
Most classical methods are related to the eigenvalues of the sample correlation matrix among the observed variables.
Famous examples include the Kaiser-Guttman criterion \citep{Guttman1954, Kaiser1960}, the Scree Test \citep{Cattell1966, Raiche2013}, and variants thereof \citep{Horn1965, Glorfeld1995}.
On the other hand, modern methods use a model selection approach \citep{Preacher2013}.
However, none of these methods are without controversy, and a great deal of literature has been devoted to criticisms on both empirical and theoretical grounds \citep{Browne1968,Ford1986, Zwick1986, Velicer1990, Howard2016, Auerswald2019}.

In summary, all methods of learning factor analysis must address three fundamental issues: (1) determine the number of factors, (2) learn the structure of the model, and (3) resolve the rotational nonidentifiability issue.
As we have reviewed, an overabundance of literature has been dedicated to addressing these issues \textit{separately}, all with varying degrees of success.
In contrast, we seek to address all three aforementioned issues \textit{simultaneously} under a unified framework.
We do this by making use of thresholded correlation graphs of the observed correlation matrix, and exploiting two common assumptions in factor analysis designs.
First, we assume that the correlation between variables that share latent factor parents is higher than the correlation between variables that do not.
Second, we assume that each latent variable has at least one observed variable of which it is the sole parent.
Under these conditions, there is a perfect correspondence between latent factors and a specific type of maximal clique from these graphs, which we call independent maximal clique (defined in Section~\ref{sec:ct_overview}).
Therefore, the structure learning problem is converted to a search for all independent maximal cliques in the graph.
We leverage this key relation to make the following contributions:
\begin{enumerate}
  \item We propose a computationally efficient algorithm for learning the number of latent factors and the support of the coefficients simultaneously.
  \item We establish high-dimensional consistency of our algorithm for learning the structure of the model.
  \item We demonstrate the efficacy and practical uses of our algorithm on both real and simulated data, including high-dimensional settings.
\end{enumerate}

There is another recent study that has taken a clique analysis approach to learning the structure of independent latent factors \citep{Markham2020}.
This work utilizes the maximal cliques of a conditional independence graph for structure learning.
In contrast, we allow for correlated latent factors and our method is much more computationally efficient through the use of independent maximal cliques.
Further, we establish theoretical guarantees for our method in high-dimensions.

Notation throughout this article will be as follows.
Define $[n] \coloneqq \{1, \ldots, n\}$.
Let $A \subseteq [n]$ and $B \subseteq [p]$ be index sets. 
The complement of $A$ is denoted as $A^c$.
For a matrix $M = (m_{ij}) \in \mathbb{R}^{n \times p}$, we define $M_{AB}$ as the submatrix of $M$ consisting of the rows indexed by $A$ and columns indexed by $B$.
Similarly for a vector $V \in \mathbb{R}^{n}$, we define $V_A$ as the subvector of $V$ consisting of the entries indexed by $A$.
We denote the support of $M$ as $\mathcal{A}(M) \coloneqq \{(i, j) : m_{ij} \neq 0\}$.
We use $\bm{0}$ to represent a matrix or vector of zeroes, whose dimension can be inferred from context and $I_n$ denotes the $n \times n$ identity matrix.

For graph theoretic notation, we define a graph $\mathcal{G}$ as an ordered pair $(V, E)$, explicitly denoted as $\mathcal{G}(V, E)$, where $V$ is a set of vertices and $E \subseteq V \times V$ is a set of edges.
For convenience, we will use $V = X$ to mean that the elements of the vertex set $V$ represent the index set of the random vector $X$.
We also restrict our attention to \textit{undirected} graphs.
A \textit{clique} of $\mathcal{G}(V, E)$ is a subset of vertices $C \subseteq V$ such that all pairs of distinct vertices in $C$ are connected by an edge.
Finally, a \textit{maximal clique} is a clique that cannot be extended by including more vertices from $V$.

\section{The Factor Analysis Model}

Let $X = (X_1, \dots, X_p) \in \mathbb{R}^{p}$ be a vector of observed variables.
The factor analysis model specifies the joint distribution of $X$ in the form of a structural equation model:
\begin{equation} \label{model}
X = \Lambda L + \epsilon,
\end{equation}
where $L = (L_1, \dots, L_d) \sim \mathcal{N}_d (0, \Phi)$ is a vector of latent variables or factors, $\epsilon = (\epsilon_1, \dots, \epsilon_p) \sim \mathcal{N}_p (0, \Omega)$ is a vector of independent errors with a diagonal $\Omega$, and $\Lambda = (\lambda_{ij}) \in \mathbb{R}^{p \times d}$ is a matrix of coefficients, or factor loadings.
For convenience, an additive mean vector $\mu$ is omitted from the model without loss of generality.
We assume that $d < p$, since factor analysis is generally used as a dimension simplification technique.
In the context of $\Lambda$, $X_i$ is a function of $L_j$ if and only if $\lambda_{ij} \neq 0$, in which case we may say that $L_j$ is a \textit{parent} of $X_i$ and $X_i$ a \textit{child} of $L_j$.
We assume that every $X_i$ has at least one parent, and every $L_j$ has at least one child, i.e., there are no rows or columns of full zeroes in $\Lambda$.
We are considering the more general case of oblique factor analysis models in this study, where the $L$ variables may be correlated.

The model stated in Equation~\eqref{model} implies a covariance structure $\Sigma$ for $X$ as follows:
\begin{equation} \label{eq:varexp}
\Sigma(\theta) \coloneqq \Var(X) = \Var(\Lambda L + \epsilon) = \Lambda \Phi \Lambda^T + \Omega,
\end{equation}
letting $\theta = \{\Lambda, \Phi, \Omega\}$.
We write $\Sigma(\theta)$ to make explicit that we are referring to $\Sigma$ as a function of the parameters $\Lambda$, $\Phi$, and $\Omega$.
At times, it will be easier to deal with observed variables which are unit variance scaled.
Let $D_{\sigma}=\text{diag}(\Sigma)^{1/2}$, i.e. a diagonal matrix with entries $\Sigma_{ii}^{1/2}$.
Then we define a unit variance scaled $X$ as $\wt{X}$ in the following manner:
\begin{equation}
\wt{X} \coloneqq D^{-1}_{\sigma} X = D^{-1}_{\sigma}(\Lambda L + \epsilon) = \wt{\Lambda} L + \wt{\epsilon},
\end{equation}
where $\wt{\Lambda}=D^{-1}_{\sigma}\Lambda$ and $\wt{\epsilon}=D^{-1}_{\sigma} \epsilon$.
Similarly, it follows that a correlation matrix $\wt{\Sigma}$ can be expressed as:
\begin{equation}\label{eq:Sigmaexp}
\wt{\Sigma}(\theta) \coloneqq D^{-1}_{\sigma} \Sigma D^{-1}_{\sigma} = \widetilde{\Lambda} \Phi \widetilde{\Lambda}^T + \widetilde{\Omega},
\end{equation}
where $\wt{\Omega} = D^{-1}_{\sigma} \Omega D^{-1}_{\sigma}$.
Note that the factor analysis model for $\Sigma$ and $\wt{\Sigma}$ are often used interchangeably, and the elements of $\wt{\Sigma}(\theta)$ may be referred to as $\rho_{ij}$.
Finally, notice that the structure of a factor analysis model is entailed by the number of factors $d$ and the support of $\Lambda$, denoted $\mathcal{A}(\Lambda)$.
Therefore we will define the \textit{structure} of a factor analysis model as the pair $(d, \mathcal{A}(\Lambda))$.

Given the structure of a factor analysis model $(d, \mathcal{A}(\Lambda))$, maximum likelihood is most widely used for estimating the parameters, based on
the Gaussian log-likelihood for $X~\sim~\mathcal{N}_p (0, \Sigma(\theta))$.
However, there is no closed-form solution for the MLE \citep{Joreskog1967}.
Therefore, iterative algorithms, such as Newton-Raphson \citep{Jennrich1969} or Expectation-Maximization \citep{Rubin1982}, are employed, which can be computationally intensive when the number of observed variables $p$ is large.
Furthermore, the parameters $\Lambda$ and $\Phi$ as in Equation~\eqref{eq:varexp} are in general not identifiable, often referred to as rotational nonidentifiability in the literature \citep{Anderson1956}.
This issue must be taken care of with additional criteria for parameter estimation or restrictions on the model structure.

\section{The Correlation Thresholding Algorithm} \label{sec:ct_algorithm}

\subsection{Preliminaries and Overview} \label{sec:ct_overview}

The main idea behind our algorithm is that for several broad classes of factor analysis models, the correlation between observed variables that share parents is stronger than correlations between variables that do not (these classes of models are discussed in Section~\ref{sec:assumptions}).
Subsequently, the correlation graph amongst the variables that share parents yields much information about the structure of the model.
We leverage these two ideas into an efficient algorithm to learn the structure.

Recall that $\rho_{ij}$ is the correlation between $X_i$ and $X_j$ given by $\widetilde{\Sigma}(\theta)$ in Equation~\eqref{eq:Sigmaexp}.
Our first step is to define a \textit{thresholded correlation graph} $\mathcal{G}(X, E(\tau))$ given some $\tau\in[0,1]$, where the edge set
\begin{equation} \label{eq:thresholded_graph}
E(\tau) \coloneqq \{(i, j) :  |\rho_{ij}| > \tau\}.
\end{equation}
In practice, given a sample of $X$, we can define an estimate of $E(\tau)$ as
\begin{equation} \label{eq:E_hat}
\hat{E}(\tau) \coloneqq \{(i, j) : \lvert r_{ij} \rvert > \tau\},
\end{equation}
where $r_{ij}$ denotes the sample correlation.

Given a thresholded correlation graph, an implied structure can be extracted by examining the cliques of the graph.
Specifically, there is a correspondence between the latent variable structures and a particular kind of maximal clique, which we term as \textit{independent maximal clique}:
\begin{definition}[Independent Maximal Clique]
\label{def:independent_maximal_clique}
Let $\mathcal{C} = \{C_1, \dots, C_k\}$ be the set of all maximal cliques in a graph $\mathcal{G}$.
Then, $C_i$ is an independent maximal clique if
\begin{equation}
C_i \nsubseteq \bigcup_{j \neq i} C_j.
\end{equation}
\end{definition}

Essentially, an independent maximal clique is a maximal clique that contains a vertex that is not a member of any other maximal clique.
We call such a vertex a \textit{unique member} of the independent maximal clique.
We use the word ``independent'' as an analog to the notion of linear independence in a vector space.
That is, an independent maximal clique cannot be covered by the union of any of the other maximal cliques.
In Section~\ref{sec:structural_identifiability}, we show that the each independent maximal clique corresponds to a latent variable, whose children are the members of those cliques.
This transforms the result of the clique search into a factor analysis structure.

\subsection{The Algorithm} \label{sec:ct_algorithm_specific}

Putting these ideas together, the core task of our algorithm is to search for a suitable $\tau_0$.
This can be done by searching over a set of candidate thresholds $\tau_k \in [0, 1]$ and analyzing their respective thresholded correlation graphs $\mathcal{G}(X, \hat{E}(\tau_k))$ for independent maximal cliques.
We exploit the correspondence between these cliques and the factor analysis structure to learn the number of latent variables and the support of $\Lambda$.
When this is done over each candidate threshold, this yields a set of candidate models for which we can utilize model selection procedures (e.g., BIC) to select a final model.
We formally describe these steps in Algorithm~\ref{alg:ct}.

\begin{algorithm}[tb]
   \caption{The Correlation Thresholding Algorithm}
   \label{alg:ct}
\begin{algorithmic}[1]
   \STATE {\bfseries Input:} Sample correlation matrix $R$ and set of thresholds  $\tau = \{\tau_k : k \in [m]\}$.
   \STATE {\bfseries Output:} Parameter estimates $\hat{\theta}$.
   \FOR{$k \in [m]$}
   \STATE Calculate $\mathcal{G}(X, \hat{E}(\tau_k))$ and extract the set of independent maximal cliques: $\mathcal{C}_k = \{C_1, \dots C_{\lvert \mathcal{C}_k \rvert}\}$;
   \STATE Set $\hat{d}_k = \lvert \mathcal{C}_k \rvert$;
   \STATE Initialize $\hat{A}_k = \emptyset$;
   \FOR{$(i, j) \in [p] \times [\hat{d}_k]$}
   \STATE If $i \in C_j$, add $(i, j)$ to $\hat{A}_k$;
   \ENDFOR
   \STATE Estimate $\hat{\theta}_k$ given $(\hat{d}_k, \hat{A}_k)$, i.e., subject to $\lambda_{ij} = 0$ for all $(i, j) \notin \hat{A}_k$;
   \ENDFOR
   \STATE Select one of the $m$ estimates from $\{\hat{\theta}_k : k \in [m]\}$ via a model selection procedure.
\end{algorithmic}
\end{algorithm}

To quickly find all independent maximal cliques in a graph, we can employ the following Lemma.
\begin{lemma} \label{lem:indep_maximal_clique}
Given a graph $\mathcal{G}(X, E)$, let $\te{ne}(X_i)$ be the set of vertices that contains $X_i$ and every node that shares an edge with $X_i$ (the neighbors of $X_i$).
\begin{enumerate}
  \item If $\te{ne}(X_i)$ is a clique, then $\te{ne}(X_i)$ is also an independent maximal clique and $X_i$ is a unique member of this clique.
  \item If $C$ is an independent maximal clique, then $C=\te{ne}(X_i)$ for any unique member $X_i\in C$.
\end{enumerate}
\end{lemma}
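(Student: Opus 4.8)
The plan is to prove both statements directly from the definitions, relying only on two elementary observations: (i) any clique that contains a vertex $X_i$ is a subset of the closed neighborhood $\te{ne}(X_i)$, since every member of such a clique is adjacent to (or equal to) $X_i$; and (ii) every clique can be extended to a maximal clique. Throughout I would be careful to treat $\te{ne}(X_i)$ as the \emph{closed} neighborhood, i.e.\ $X_i\in\te{ne}(X_i)$, as stated in the lemma.

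\textbf{Part 1.} Assume $\te{ne}(X_i)$ is a clique. First I would verify it is maximal: if a vertex $v$ could be added while keeping a clique, then $v$ is adjacent to every vertex of $\te{ne}(X_i)$, in particular to $X_i$, hence $v\in\te{ne}(X_i)$ already; so $\te{ne}(X_i)$ cannot be extended and is therefore one of the maximal cliques $C_1,\dots,C_k$ of $\mathcal G$. Next I would show $X_i$ is a unique member. Suppose $X_i\in C'$ for some maximal clique $C'$. By observation (i), $C'\subseteq\te{ne}(X_i)$; since $C'$ is maximal and $\te{ne}(X_i)$ is a clique, this forces $C'=\te{ne}(X_i)$. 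Thus $X_i$ belongs to no maximal clique other than $\te{ne}(X_i)$, which is exactly the condition in \Cref{def:independent_maximal_clique} that $\te{ne}(X_i)$ is an independent maximal clique with unique member $X_i$.

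\textbf{Part 2.} Let $C$ be an independent maximal clique and let $X_i$ be any unique member of $C$ (such a vertex exists precisely because $C\nsubseteq\bigcup_{j\neq i}C_j$). Observation (i) gives $C\subseteq\te{ne}(X_i)$ immediately, since $C$ is a clique containing $X_i$. For the reverse inclusion, take $v\in\te{ne}(X_i)$; if $v=X_i$ then $v\in C$, so assume $v$ is adjacent to $X_i$. Then $\{X_i,v\}$ is a clique, which by observation (ii) extends to a maximal clique $C'$ with $X_i\in C'$. Because $X_i$ is a unique member of $C$, the only maximal clique containing $X_i$ is $C$, so $C'=C$ and hence $v\in C$. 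Therefore $\te{ne}(X_i)\subseteq C$, and combining the two inclusions yields $C=\te{ne}(X_i)$.

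\textbf{Main obstacle.} There is no substantive difficulty; the argument is a short sequence of clique manipulations. The only points requiring attention are bookkeeping ones: consistently using the closed neighborhood; checking that $\te{ne}(X_i)$ is genuinely one of the enumerated maximal cliques before speaking of independence; and noting that an independent maximal clique is guaranteed (by definition) to possess at least one unique member, so that the statement of Part 2 is not vacuous.
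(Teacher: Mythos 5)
Your proof is correct and follows essentially the same route as the paper's: both parts hinge on the observations that any clique containing $X_i$ lies inside $\te{ne}(X_i)$ and that a neighbor outside the clique would generate a second maximal clique containing the unique member. The only cosmetic difference is that in Part 2 you argue both inclusions directly, whereas the paper shows $\te{ne}(X_i)\subseteq C$ by contradiction and then invokes Part 1 to conclude equality.
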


In the worst case scenario, all independent maximal cliques can be found by checking whether $\te{ne}(X_i)$ is a clique for every node $X_i$.
The computational cost for checking if $\te{ne}(X_i)$ is a clique has a brute force complexity of $O(k^2)$, assuming a maximum neighbor size of $k$.
Thus, the total computational cost on all $p$ nodes can be no greater than and usually well below $O(k^2p)$, which is very efficient even for large graphs, allowing our algorithm to be used in high-dimensional settings.
This is in sharp contrast to the exponential complexity in listing all (non-independent) maximal cliques in a graph \citep{Eppstein2010}.

After extracting these independent maximal cliques in Step~4, we learn the structure of the model in Steps~5 through~9.
The number of independent maximal cliques is set as the estimate of $d$, which is also the number of columns in $\Lambda$.
Then, the nodes in each $C_j$ determine if $\lambda_{ij}$ is zero or non-zero for each $i\in[p]$, allowing us to construct a candidate support $\hat{A}_k$.

In Step~10 we estimate each model given the learned structure, then in Step~12 we use a model selection procedure to select one of the models.
We note that these steps are general in that any estimation and model selection method can be utilized here.
In our implementation, we will prefer to use maximum likelihood estimation and BIC for model selection.
Since this pair of methods are statistically consistent, this leads to the final output model having consistent parameters and model structure, as we will show in Section~\ref{sec:consistency}.

\section{Theoretical Guarantees} \label{sec:theoretical_analysis}

In this section, we establish theoretical guarantees for the CT algorithm.
We assume throughout that the factor analysis model in Equation~\eqref{model} holds. 
Proofs of these results can be found in Appendix~\ref{app:proofs}.

\subsection{Assumptions} \label{sec:structural_identifiability}

We first present the main assumptions under which the structure for $\Lambda$ can be recovered from the thresholded correlation graph.
A discussion of these assumptions is provided in Section~\ref{sec:assumptions}.

Let the \textit{parent set} of $X_i$ be $\Pi_i \coloneqq \{j : \lambda_{ij} \neq 0, j \in [d]\}$.
Then we formalize the set of pairs that share parents as
\begin{equation} \label{e_definition_E}
E_0 \coloneqq \{ (i, j) \in [p] \times [p] : \Pi_i \cap \Pi_j \neq \emptyset\}.
\end{equation}
Subsequently, we denote the set of pairs that do not share parents (the complement of $E_0$) as
\begin{equation} \label{e_definition_Ec}
E_0^c = \{ (i, j)\in [p] \times [p] : \Pi_i \cap \Pi_j = \emptyset\}.
\end{equation}
Essentially, we would like to find some threshold $\tau_0$ that is able to separate the $E_0$ and $E_0^c$ sets by the magnitude of the correlations.
We will define this notion as \textit{thresholdable}:
\begin{definition}[Thresholdable]
\label{def:thresholdable}
A set of parameters $\theta$ is called \textit{thresholdable} if there exists a threshold $\tau_0$ such that
\begin{equation} \label{e_thresholdability}
\underset{(k, l) \in E_0^c}{\max} \lvert \rho_{kl} \rvert < \tau_0 < \underset{(i, j) \in E_0}{\min}\lvert \rho_{ij} \rvert.
\end{equation}
\end{definition}

Recall the use of independent maximal cliques (Definition~\ref{def:independent_maximal_clique}) in the CT algorithm.
Perfect model structure recovery can be achieved if there is a one-to-one correspondence between the latent variable structures and the independent maximal cliques.
A simple sufficient condition for such a correspondence to hold is the \emph{unique child condition}:
\begin{condition}[Unique Child Condition] \label{con:ucc}
Let the \textit{child set} of a latent variable be denoted $\te{ch}(L_k) = \{i \in[p]: \lambda_{ik} \neq 0\}$.
If
\begin{equation} \label{eq:uc_def}
U_k \coloneqq \te{ch}(L_k) - \bigcup_{j \neq k} \te{ch}(L_j) \neq \emptyset,\quad\quad \forall\; k\in[d],
\end{equation}
i.e., if each latent variable $L_k$ has a non-empty set of unique children $U_k$,
then we say that the unique child condition holds.
It essentially means that all latent parents have at least one unique child variable.
\end{condition}
Given this condition, we can obtain a bijection between the latent variables and the independent maximal cliques in $\mathcal{G}(X, E_0)$.
We state this in the following lemma.
\begin{lemma} \label{lem:structural_identifiability}
If the unique child condition holds in $\Lambda$ (Condition~\ref{con:ucc}), then the set $\{\te{ch}(L_k):k\in[d]\}$ is identical to the set of independent maximal cliques in $\mathcal{G}(X, E_0)$.
\end{lemma}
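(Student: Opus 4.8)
The plan is to work entirely inside the graph $\mathcal{G}(X, E_0)$ and exploit the fact that, by the definition of $E_0$ in Equation~\eqref{e_definition_E}, two vertices are adjacent exactly when they share a latent parent. The single identity that drives everything is that for every $i\in[p]$,
\[
\te{ne}(X_i) \;=\; \{X_j : \Pi_j\cap\Pi_i\neq\emptyset\} \;=\; \bigcup_{k\in\Pi_i}\te{ch}(L_k).
\]
I would then prove the two set inclusions separately, in each case reducing to the neighborhood characterization of independent maximal cliques given in Lemma~\ref{lem:indep_maximal_clique}.

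For the inclusion $\{\te{ch}(L_k):k\in[d]\}\subseteq\{\text{independent maximal cliques of }\mathcal{G}(X,E_0)\}$, I would fix $k$ and pick a unique child $X_u\in U_k$, which is nonempty by Condition~\ref{con:ucc}. By the definition of $U_k$, the only parent of $X_u$ is $L_k$, i.e.\ $\Pi_u=\{k\}$, so the displayed identity gives $\te{ne}(X_u)=\te{ch}(L_k)$. Furthermore $\te{ch}(L_k)$ is a clique in $\mathcal{G}(X,E_0)$, since any two of its members share the parent $L_k$ and are therefore joined by an edge. Lemma~\ref{lem:indep_maximal_clique}(1) then applies directly and shows that $\te{ch}(L_k)=\te{ne}(X_u)$ is an independent maximal clique, with $X_u$ a unique member.

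For the reverse inclusion, I would take an arbitrary independent maximal clique $C$ and let $X_i$ be a unique member of it; by Lemma~\ref{lem:indep_maximal_clique}(2), $C=\te{ne}(X_i)=\bigcup_{k\in\Pi_i}\te{ch}(L_k)$. Since every observed variable has at least one parent, $\Pi_i\neq\emptyset$, and the crux is to rule out $\lvert\Pi_i\rvert\geq 2$. If distinct $k_1,k_2\in\Pi_i$, then Condition~\ref{con:ucc} lets me choose $X_a\in U_{k_1}$ and $X_b\in U_{k_2}$, so $\Pi_a=\{k_1\}$ and $\Pi_b=\{k_2\}$ are disjoint, meaning $X_a$ and $X_b$ are nonadjacent; yet $X_a\in\te{ch}(L_{k_1})\subseteq C$ and $X_b\in\te{ch}(L_{k_2})\subseteq C$, contradicting that $C$ is a clique. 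Hence $\Pi_i=\{k\}$ for a single $k$, and the displayed identity collapses to $C=\te{ch}(L_k)$. Combining the two inclusions gives the stated set identity; I would additionally note that $k\mapsto\te{ch}(L_k)$ is injective (because $U_k\subseteq\te{ch}(L_k)$ while $U_k\cap\te{ch}(L_j)=\emptyset$ for $j\neq k$, and $U_k\neq\emptyset$), which upgrades the identity to the bijection described before the lemma.

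The only genuinely delicate step is the reverse inclusion, and within it the claim that a unique member of an independent maximal clique can have at most one latent parent; this is precisely where Condition~\ref{con:ucc} is used in full strength, since one needs a unique child for \emph{each} competing parent in order to exhibit the nonadjacent pair that breaks the clique property. The rest is routine bookkeeping with the neighborhood identity and invocations of Lemma~\ref{lem:indep_maximal_clique}.
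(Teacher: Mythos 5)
Your proof is correct, and its forward inclusion is essentially the paper's: both pick a unique child $X_u\in U_k$, note that $\te{ch}(L_k)$ is a clique and that $X_u$ has no neighbors outside it, and conclude that $\te{ch}(L_k)$ is an independent maximal clique (the paper argues this directly rather than by citing Lemma~\ref{lem:indep_maximal_clique}, but the content is identical). Where you genuinely diverge is the reverse inclusion. The paper never analyzes an arbitrary independent maximal clique: having shown each $\te{ch}(L_k)$ is a maximal clique and observing that $\bigcup_{k}\te{ch}(L_k)=X$ (every observed variable has a parent), it concludes that any maximal clique other than these is covered by maximal cliques different from itself and so fails Definition~\ref{def:independent_maximal_clique}; this settles the reverse direction in one line, with no appeal to Lemma~\ref{lem:indep_maximal_clique}(2) and only a single use of Condition~\ref{con:ucc}. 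You instead write an arbitrary independent maximal clique as $\te{ne}(X_i)=\bigcup_{k\in\Pi_i}\te{ch}(L_k)$ for a unique member $X_i$ and invoke the unique child condition a second time to produce a nonadjacent pair inside the clique whenever $\lvert\Pi_i\rvert\geq 2$. That step is valid ($X_a$ and $X_b$ are distinct from each other and from $X_i$ because unique children have singleton parent sets), and it yields a small bonus the paper's argument does not make explicit: any unique member of an independent maximal clique must itself be a unique child of a single factor, and the map $k\mapsto\te{ch}(L_k)$ is injective, which turns the set identity into the bijection advertised before the lemma. The trade-off is extra machinery and a second use of Condition~\ref{con:ucc}; the paper's covering argument is shorter and needs less.
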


Recall the key observation that the dimension of $L$ and support of $\Lambda$, i.e. $(d,\mathcal{A}(\Lambda))$, completely encodes the structure of the model.
The CT algorithm leverages Lemma~\ref{lem:structural_identifiability} to recover the structure of a factor analysis model $(d,\mathcal{A}(\Lambda))$ by finding independent maximal cliques in an estimated graph $\mathcal{G}(X, \hat{E}(\tau_k))$.

\begin{remark} \label{rem:permutation_equivalence}
We note our use of $\mathcal{A}(\Lambda)$ defines a model structure up to a column permutation of $\Lambda$.
That is, we consider different ordering or labeling of the factors to be equivalent, since they define the same $\Sigma(\theta)$ in Equation~\eqref{eq:varexp}.
\end{remark}

\subsection{Error Bounds and Consistency} \label{sec:consistency}

In this section, we establish the consistency of the CT algorithm.
We will call a structural estimate $(\hat{d}, \mathcal{A}(\hat{\Lambda}))$ consistent if
\begin{equation} \label{eq:struc_consistent}
\lim_{n \to \infty} \Prob\Big[ (\hat{d}, \mathcal{A}(\hat{\Lambda})) = (d, \mathcal{A}(\Lambda))\Big] = 1,
\end{equation}
given an i.i.d. sample of size $n$ from the model in Equation~\eqref{model}.
By Lemma~\ref{lem:structural_identifiability}, the model structure $(\mathcal{A}(\Lambda), d)$ can be recovered exactly from the set of independent maximal cliques in $\mathcal{G}(X,E_0)$ when the unique child condition holds.
Therefore, structural consistency holds when $\lim_{n \to \infty} \Prob(\hat{E}(\tau_0) = E_0) = 1$ under the unique child condition for a suitable $\tau_0$.
In what follows, it will be useful to define a gap of separation for a thresholdable $\theta$ as
\begin{equation} \label{eq:gap}
\gamma \coloneqq \dfrac{1}{2} \left[ \underset{(i,j) \in E_0}{\min}\lvert \rho_{ij} \rvert  - \underset{(i,j) \in E_0^c}{\max} \lvert \rho_{ij} \rvert \right].
\end{equation}

\begin{theorem} \label{theo:consistency}
\label{t_bound}
Assume the model described in Equation~\eqref{model} holds for $X$ and that the correlations between all pairs $(X_i, X_j)$ are bounded such that $\max_{i \neq j}\lvert \rho_{ij} \rvert \leq M < 1$.
If $\theta$ is thresholdable with a gap $\gamma>0$, then
\begin{equation} \label{eq:consistency_bound}
\Prob(\hat{E}(\tau_0) \ne E_0) \leq  C p(p - 1) (n - 2)\left( \dfrac{4 - \gamma^2}{4 + \gamma^2} \right)^{n - 4}:= \eta,
\end{equation}
where $0 < C < \infty$ only depends on $M$.
If additionally the unique child condition holds (Condition~\ref{con:ucc}), then we have
\begin{equation} \label{eq:support_bound}
\Prob((\hat{d}, \mathcal{A}(\hat{\Lambda})) = (d, \mathcal{A}(\Lambda))) \geq 1 - \eta,
\end{equation}
where $(\hat{d}, \mathcal{A}(\hat{\Lambda}))$ is the estimated model structure by the CT algorithm with cutoff $\tau_0$.
\end{theorem}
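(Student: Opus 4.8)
The plan is to first bound the probability that the estimated thresholded graph misidentifies the parent-sharing edge set, i.e.\ to establish \eqref{eq:consistency_bound}, and then read off \eqref{eq:support_bound} from Lemma~\ref{lem:structural_identifiability}. Fix $\tau_0$ to be the midpoint of the separation gap, $\tau_0 = \tfrac12\big(\min_{(i,j)\in E_0}|\rho_{ij}| + \max_{(i,j)\in E_0^c}|\rho_{ij}|\big)$; thresholdability then gives $|\rho_{ij}| \ge \tau_0+\gamma$ for $(i,j)\in E_0$ and $|\rho_{ij}|\le\tau_0-\gamma$ for $(i,j)\in E_0^c$. Since the diagonal always agrees, the event $\{\hat E(\tau_0)\ne E_0\}$ is contained in the union, over the $\binom{p}{2}$ unordered pairs $i\ne j$, of the ``crossing'' events $\big\{(i,j)\in E_0,\ |r_{ij}|\le\tau_0\big\}\cup\big\{(i,j)\in E_0^c,\ |r_{ij}|>\tau_0\big\}$. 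A union bound therefore reduces \eqref{eq:consistency_bound} to a single-pair statement: for a bivariate normal pair $(X_i,X_j)$ with population correlation $\rho$ and $n$ i.i.d.\ observations, bound $\Prob(|r_{ij}|\le\tau_0)$ when $|\rho|\ge\tau_0+\gamma$ and $\Prob(|r_{ij}|>\tau_0)$ when $|\rho|\le\tau_0-\gamma$, each by a quantity of the form $C(M)(n-2)\big(\tfrac{4-\gamma^2}{4+\gamma^2}\big)^{n-4}$.

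This single-pair tail bound is the crux, and I would derive it using only the bivariate Gaussianity of $(X_i,X_j)$. After standardizing both variables to unit variance and centering the sample (which costs one degree of freedom, explaining the shift from $n$ to $n-1$, and later $n-2$), one route replaces the correlation event by a comparison of \emph{independent} scaled chi-squared variables: since $X_i+X_j$ and $X_i-X_j$ are uncorrelated, hence independent, Gaussians, $S_+:=\sum_\ell (\widetilde X_{i\ell}+\widetilde X_{j\ell})^2$ and $S_-:=\sum_\ell(\widetilde X_{i\ell}-\widetilde X_{j\ell})^2$ are independent with $S_+\sim 2(1+\rho)\chi^2_{n-1}$ and $S_-\sim 2(1-\rho)\chi^2_{n-1}$, and AM--GM yields $\{|r_{ij}|\le\tau_0\}\subseteq\{|(S_+-S_-)/(S_++S_-)|\le\tau_0\}$. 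The latter event has the form $\{aS_--bS_+\ge0\}$ for explicit positive $a,b$ depending on $\tau_0$, and the Chernoff bound $\Prob(aS_--bS_+\ge0)\le\inf_{s>0}\mathbb E[e^{s(aS_--bS_+)}]$ evaluates, after the standard optimization over $s$, to $\big(\tfrac{(1-\rho^2)(1-\tau_0^2)}{(1-\rho\tau_0)^2}\big)^{(n-1)/2}=\big(1-(\tfrac{\rho-\tau_0}{1-\rho\tau_0})^2\big)^{(n-1)/2}$; since $|\rho-\tau_0|\ge\gamma$ and $|\rho|\le M$, this base is bounded away from $1$ and, with the polynomial slack absorbed into $C(M)$, can be dominated by $C(M)(n-2)(\tfrac{4-\gamma^2}{4+\gamma^2})^{n-4}$. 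The ``false inclusion'' direction $\Prob(|r_{ij}|>\tau_0)$ when $|\rho|\le\tau_0-\gamma$ needs a different but analogous reduction --- for instance via the regression decomposition $r_{ij}^2 = 1-\mathrm{RSS}/\sum_\ell\widetilde X_{j\ell}^2$, which expresses the event through an independent $\chi^2_{n-2}$ and a (noncentral) $\chi^2_1$, again handled by Chernoff. An equivalent route is to integrate the exact Fisher--Hotelling density of the sample correlation over the tail region, bounding the Gaussian-hypergeometric and Gamma-ratio factors by constants depending only on $M$; this directly produces the $(n-2)$ prefactor and a base of the stated shape. I expect the genuine difficulty to be entirely here: keeping the two error directions and the sign of $\rho$ straight (the sign handled by flipping one variable, which preserves $|r_{ij}|$ and $|\rho|$), and squeezing the constants into exactly the form $\tfrac{4-\gamma^2}{4+\gamma^2}$ with exponent $n-4$ rather than a qualitatively different (possibly even tighter) expression.

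Granting \eqref{eq:consistency_bound}, the structural bound \eqref{eq:support_bound} is immediate. On the event $\{\hat E(\tau_0)=E_0\}$, which has probability at least $1-\eta$, the graph $\mathcal G(X,\hat E(\tau_0))$ used in Step~4 of Algorithm~\ref{alg:ct} coincides with $\mathcal G(X,E_0)$, so the algorithm extracts exactly the independent maximal cliques of $\mathcal G(X,E_0)$. Under the unique child condition, Lemma~\ref{lem:structural_identifiability} identifies these cliques with the child sets $\{\te{ch}(L_k):k\in[d]\}$; hence $\hat d=|\mathcal C|=d$, and by Steps~6--9, $\hat A=\{(i,k):i\in\te{ch}(L_k)\}=\{(i,k):\lambda_{ik}\ne0\}=\mathcal A(\Lambda)$, the equality being up to a column relabeling of the factors as in Remark~\ref{rem:permutation_equivalence}. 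Thus $(\hat d,\mathcal A(\hat\Lambda))=(d,\mathcal A(\Lambda))$ on this event, and therefore $\Prob\big((\hat d,\mathcal A(\hat\Lambda))=(d,\mathcal A(\Lambda))\big)\ge\Prob(\hat E(\tau_0)=E_0)\ge1-\eta$.
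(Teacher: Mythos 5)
Your overall architecture matches the paper's: pick $\tau_0$ at the midpoint of the gap, observe that $\{\hat E(\tau_0)\ne E_0\}$ is the union of per-pair crossing events, union bound over the $p(p-1)/2$ pairs, and then obtain \eqref{eq:support_bound} from Lemma~\ref{lem:structural_identifiability} on the event $\{\hat E(\tau_0)=E_0\}$. That last structural step in your third paragraph is exactly the paper's argument and is fine.

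The gap is in the crux you yourself flagged: the per-pair tail bound \emph{in the stated form} is never actually established. The paper does not derive it from scratch; it notes the containments $\{|r_{ij}|\ge\tau_0\}\subseteq\{|r_{ij}-\rho_{ij}|\ge\gamma\}$ for $(i,j)\in E_0^c$ and $\{|r_{ij}|\le\tau_0\}\subseteq\{|r_{ij}-\rho_{ij}|\ge\gamma\}$ for $(i,j)\in E_0$ (both one-sided crossings are absorbed into the single two-sided deviation event, since $\gamma\le\big\lvert\,|\rho_{ij}|-\tau_0\big\rvert$), and then invokes the Gaussian sample-correlation concentration bound of Lemma~\ref{l_KB07} (Lemma~1 of Kalisch and B\"uhlmann, 2007), which is precisely where the factors $(n-2)$ and $\bigl(\frac{4-\gamma^2}{4+\gamma^2}\bigr)^{n-4}$ and the constant depending only on $M$ come from. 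Your replacement for this lemma is incomplete in both directions. For the underestimation direction your Chernoff computation giving base $1-\bigl(\frac{\rho-\tau_0}{1-\rho\tau_0}\bigr)^2$ is correct, but the final step ``base bounded away from 1, absorb polynomial slack into $C(M)$'' is not a valid argument: a constant cannot compensate a mismatched exponential rate. Indeed, if you only use $1-\rho\tau_0\le 2$ (so base $\le 1-\gamma^2/4$), the per-sample rate $\tfrac12\log\tfrac{1}{1-\gamma^2/4}$ is strictly smaller than the target rate $\log\tfrac{4+\gamma^2}{4-\gamma^2}$ for small $\gamma$, and the claimed domination fails as $n\to\infty$ for fixed $\gamma$; to make it work you need the sharper bound $\frac{\rho-\tau_0}{1-\rho\tau_0}\ge\gamma$ (valid after the sign flip since $\rho\tau_0\ge0$) together with the inequality $\tfrac12\log\tfrac{1}{1-\gamma^2}\ge\log\tfrac{4+\gamma^2}{4-\gamma^2}$, neither of which you state or verify. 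For the false-inclusion direction ($(i,j)\in E_0^c$, bounding $\Prob(|r_{ij}|>\tau_0)$) you only name two possible routes (regression/noncentral chi-square Chernoff, or integrating the Fisher--Hotelling density) without carrying either out, and matching the exact form $C(M)(n-2)\bigl(\frac{4-\gamma^2}{4+\gamma^2}\bigr)^{n-4}$ is left open. Since that specific bound \emph{is} the statement of \eqref{eq:consistency_bound}, the proof is not complete as proposed; the shortest repair is the paper's: reduce both crossing events to $\{|r_{ij}-\rho_{ij}|\ge\gamma\}$ via $\big\lvert\,|r_{ij}|-|\rho_{ij}|\,\big\rvert\le|r_{ij}-\rho_{ij}|$ and cite (or reprove) Lemma~\ref{l_KB07}.
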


Due to the exponential decay of the term $[(4 - \gamma^2)/(4 + \gamma^2)]^{n-4}$, consistency is trivially implied under a fixed $p$ regime.
More generally speaking, for any joint distribution of $X$ under which the central limit theorem holds for the sample correlations $\{r_{ij}\}$, structural consistency would also follow.
By the classical central limit theorem and the delta method, this would include the class of distributions with finite fourth-order moments \citep{Ferguson1996}.
Furthermore, we will use the bound described in Inequality~\ref{eq:consistency_bound} to develop a consistency result with high-dimensional accommodations where the dimension $p = p_n\gg n$.
\begin{theorem} \label{theo:hd_consistency}
Assume the model described in Equation~\eqref{model} holds for $X$ and that the correlations between all pairs $(X_i, X_j)$ are bounded such that $\max_{i \neq j}\lvert \rho_{ij} \rvert \leq M < 1$ for some universal constant $M$ independent of $n$.
If $\theta$ is thresholdable with a gap $\gamma=\gamma_n$ such that $\gamma_n^2 \geq c_1/(n - 4)^b$ for some $c_1 > 0$ and $b \in [0, 1)$ when $n$ is large, and $p_n = o(\exp(c(n - 4)^{1-b}))$, where $0 < c < c_1/8$, then
\begin{equation}
\lim_{n \to \infty} \Prob(\hat{E}(\tau_0) = E_0) = 1.
\end{equation}
If additionally the unique child condition holds (Condition~\ref{con:ucc}), then the structural estimate is consistent, as in Equation~\eqref{eq:struc_consistent}.
\end{theorem}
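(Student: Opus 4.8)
The plan is to obtain Theorem~\ref{theo:hd_consistency} as a direct consequence of the finite‑sample bound in Theorem~\ref{theo:consistency}: it suffices to show that, under the stated scaling of $\gamma_n$ and $p_n$, the quantity
\[
\eta_n \;=\; C\, p_n(p_n-1)(n-2)\left(\frac{4-\gamma_n^2}{4+\gamma_n^2}\right)^{n-4}
\]
tends to $0$ as $n\to\infty$; then $\Prob(\hat E(\tau_0)=E_0)\geq 1-\eta_n\to 1$, and the structural statement follows from Inequality~\eqref{eq:support_bound}. First I would record the elementary fact that thresholdability together with $\max_{i\neq j}|\rho_{ij}|\leq M<1$ forces $\gamma_n\leq M/2<1/2$ for every $n$, since $\gamma_n=\tfrac12\big[\min_{E_0}|\rho_{ij}|-\max_{E_0^c}|\rho_{ij}|\big]\leq \tfrac12 M$; in particular $\gamma_n^2\leq 4$, which is all that is needed to control the base of the exponential uniformly in $n$. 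Note also that $C$ depends only on $M$, a universal constant, so $\log C$ is an $O(1)$ term.

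Next I would take logarithms in the expression for $\eta_n$ and bound the three contributions separately. For the exponential factor, writing $\tfrac{4-\gamma_n^2}{4+\gamma_n^2}=1-\tfrac{2\gamma_n^2}{4+\gamma_n^2}$ and using $\log(1-x)\leq -x$ together with $4+\gamma_n^2\leq 8$ gives
\[
(n-4)\log\!\left(\frac{4-\gamma_n^2}{4+\gamma_n^2}\right)\;\leq\;-\frac{(n-4)\gamma_n^2}{4}\;\leq\;-\frac{c_1}{4}\,(n-4)^{1-b},
\]
where the last inequality uses the gap assumption $\gamma_n^2\geq c_1/(n-4)^b$ for $n$ large. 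For the dimension factor, $p_n=o(\exp(c(n-4)^{1-b}))$ implies $\log p_n\leq c(n-4)^{1-b}$ eventually, hence $\log\big(p_n(p_n-1)\big)\leq 2\log p_n\leq 2c(n-4)^{1-b}$. Combining these with the trivial $\log(n-2)$ term yields
\[
\log\eta_n\;\leq\;\log C+\log(n-2)\;-\;\Big(\tfrac{c_1}{4}-2c\Big)(n-4)^{1-b}.
\]
Since $c<c_1/8$ by hypothesis, the coefficient $\tfrac{c_1}{4}-2c$ is strictly positive, and because $b\in[0,1)$ the term $(n-4)^{1-b}\to\infty$ and dominates $\log(n-2)$; therefore $\log\eta_n\to-\infty$, i.e.\ $\eta_n\to 0$, which proves $\lim_{n\to\infty}\Prob(\hat E(\tau_0)=E_0)=1$.

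For the second assertion I would argue exactly as in the derivation of Inequality~\eqref{eq:support_bound}: whenever $\hat E(\tau_0)=E_0$, the CT algorithm with cutoff $\tau_0$ extracts precisely the independent maximal cliques of $\mathcal{G}(X,\hat E(\tau_0))=\mathcal{G}(X,E_0)$, which under the unique child condition coincide with $\{\te{ch}(L_k):k\in[d]\}$ by Lemma~\ref{lem:structural_identifiability}, and these determine $(d,\mathcal{A}(\Lambda))$ up to the column relabeling of Remark~\ref{rem:permutation_equivalence}. Hence $\Prob\big((\hat d,\mathcal{A}(\hat\Lambda))=(d,\mathcal{A}(\Lambda))\big)\geq \Prob(\hat E(\tau_0)=E_0)\to 1$, which is the structural consistency of Equation~\eqref{eq:struc_consistent}.

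All steps are elementary once Theorem~\ref{theo:consistency} is available; the one point requiring care — and the reason the explicit threshold $c<c_1/8$ appears — is the balancing of the two competing $\Theta\big((n-4)^{1-b}\big)$ terms, namely the negative drift $-\tfrac{c_1}{4}(n-4)^{1-b}$ coming from the exponential concentration base against the $+2c(n-4)^{1-b}$ growth of $\log p_n^2$. I would also double‑check that the constant in the bound $\log\!\big(\tfrac{4-\gamma^2}{4+\gamma^2}\big)\leq-\gamma^2/4$ is exactly what makes the admissible range of $c$ match the statement; a looser bound would still deliver consistency, but only over a smaller range of $c$.
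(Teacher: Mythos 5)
Your proof is correct and follows essentially the same route as the paper: it applies the finite-sample bound $\eta$ of Theorem~\ref{theo:consistency}, shows $\eta_n \to 0$ under the assumed scalings of $\gamma_n$ and $p_n$, and then invokes the unique child condition via Lemma~\ref{lem:structural_identifiability} for the structural claim. The only (minor) difference is technical: you bound the base with $\log(1-x)\le -x$ to get the non-asymptotic inequality $\left(\frac{4-\gamma_n^2}{4+\gamma_n^2}\right)^{n-4}\le \exp\!\left(-\tfrac{c_1}{4}(n-4)^{1-b}\right)$ directly, whereas the paper uses the limit $(1+a/x)^x\to e^a$ with an intermediate constant $c_2\in(0,c_1)$; your variant is slightly cleaner and matches the stated range $c<c_1/8$ without the $c_2$ slack.
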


Note that any fixed value between $\max_{E_0^c} \{ \lvert \rho_{ij} \rvert \}$ and $\min_{E_0}\{\lvert \rho_{ij}\rvert\}$ will be a valid choice for $\tau_0$ for structure learning consistency.
This result is straightforward to generalize to non-Gaussian forms of $X$, which could result from non-Gaussian combinations of $L$ and $\epsilon$.
All that would be required is to replace our use of Lemma~\ref{l_KB07} (a Gaussian sample correlation concentration bound) in the proofs of Theorems~\ref{theo:consistency} and~\ref{theo:hd_consistency} with a bound for any non-Gaussian $X$ of interest (see Appendices~\ref{sec:proof_consistency} and~\ref{sec:proof_hds}, respectively).
So long as this bound is sufficiently well-behaved, the probability bounds in Theorem~\ref{theo:consistency} will hold as will Theorem~\ref{theo:hd_consistency} with different dependencies between $p$ and $n$.

In the practical context of the CT algorithm, recall that a suitable $\tau_0$ is actually unknown, and the algorithm estimates and selects among a set of models based on a candidate set $\{\tau_k\}$.
Assuming that a suitable $\tau_0$ is contained in $\{\tau_k\}$, the unique child condition and consistency implies that the correct model structure is among the set of candidate models, asymptotically.
From here, overall parameter consistency follows by simply using a consistent parameter estimation method (Step~10) and a consistent model selection procedure (Step~12) in the algorithm.
A straightforward choice would be to use maximum likelihood estimation in conjunction with BIC model selection.
Then, asymptotically, the CT algorithm will produce the correct model structure with consistent parameter estimates.

\subsection{Rotational Uniqueness} \label{s_rotationalUniqueness}

An important consideration with a factor analysis model is the identifiability of the parameters $\theta = \{\Lambda, \Phi, \Omega\}$.
It is well known that factor analysis models lack of rotational uniqueness, which implies that there may be many $(\Lambda, \Phi)$ pairs that exist such that $\Sigma(\theta) = \Lambda \Phi \Lambda^T + \Omega$.
However, the solutions learned by the CT algorithm resolves this nonidentifiability issue given that the zero constraints implied by $\mathcal{A}(\hat{\Lambda})$ are preserved.
A formal definition of rotational uniqueness can be found in Appendix~\ref{sec:def_rot_unique}.

\begin{corollary} \label{cor:ucc_rotation}
If the unique child condition holds in $\Lambda$, then $\theta$ is locally rotationally unique (i.e., unique up to a polarity reversal on columns).
\end{corollary}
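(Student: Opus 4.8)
The plan is to reduce the statement to a purely linear-algebraic fact about the nonsingular transformations of $\Lambda$ that preserve its sparsity pattern, and then use the unique children to pin those transformations down to signed permutations. Using the definition of rotational uniqueness in Appendix~\ref{sec:def_rot_unique}, and holding $\Omega$ fixed so that $\Sigma(\theta^\ast)=\Sigma(\theta)$ reduces to $\Lambda^\ast\Phi^\ast(\Lambda^\ast)^T=\Lambda\Phi\Lambda^T$, it suffices to show that any nonsingular $d\times d$ matrix $T$ with $\Lambda^\ast=\Lambda T$ and $\mathcal{A}(\Lambda^\ast)$ equal to $\mathcal{A}(\Lambda)$ up to a column permutation (recall Remark~\ref{rem:permutation_equivalence}) must be a signed permutation matrix. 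Since signed permutation matrices form a discrete set containing the identity, this yields local rotational uniqueness, with the residual ambiguity being exactly a polarity reversal $\operatorname{diag}(\pm1)$ on the columns. A preliminary observation that makes this reduction valid is that the unique child condition already forces $\Lambda$ to have full column rank: picking a unique child $i_k\in U_k$ for each $k$ (these indices are distinct, since a unique child of $L_k$ has no other parent), the $d\times d$ submatrix $\Lambda_{\{i_1,\dots,i_d\},[d]}$ is diagonal with nonzero diagonal entries $\lambda_{i_k k}$, hence invertible. With $\Phi,\Phi^\ast\succ0$ this standard rank bookkeeping gives $\operatorname{col}(\Lambda^\ast)=\operatorname{col}(\Lambda)$, so indeed $\Lambda^\ast=\Lambda T$ and $\Phi^\ast=T^{-1}\Phi T^{-T}$ for a unique nonsingular $T$.

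Next I would carry out the combinatorial core. Fix $k\in[d]$ and its unique child $i_k\in U_k$. By definition of $U_k$, the $i_k$-th row of $\Lambda$ is supported on column $k$ alone, i.e., $\lambda_{i_k j}=0$ for all $j\ne k$ while $\lambda_{i_k k}\ne0$. Hence the $i_k$-th row of $\Lambda^\ast=\Lambda T$ equals $\lambda_{i_k k}$ times the $k$-th row of $T$. Because $\mathcal{A}(\Lambda^\ast)$ agrees with $\mathcal{A}(\Lambda)$ up to a column permutation, this row of $\Lambda^\ast$ must itself have exactly one nonzero entry; as $\lambda_{i_k k}\ne0$, the $k$-th row of $T$ has exactly one nonzero entry. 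Running this over all $k$ shows every row of $T$ has a single nonzero entry, and nonsingularity of $T$ forces those entries to lie in distinct columns, so $T=P D_0$ for a permutation matrix $P$ and a nonsingular diagonal matrix $D_0$.

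Finally I would use the unit-diagonal constraint on the factor correlation matrices to upgrade $D_0$ to a sign matrix. From $\Phi^\ast=T^{-1}\Phi T^{-T}=D_0^{-1}P^{T}\Phi P D_0^{-1}$, the $k$-th diagonal entry of $\Phi^\ast$ equals $(D_0)_{kk}^{-2}$ times the corresponding diagonal entry of $P^{T}\Phi P$, which is $1$ since $\Phi$ is a correlation matrix; hence $(D_0)_{kk}^2=1$ for all $k$, i.e., $D_0=\operatorname{diag}(\pm1)$. Thus $T$ is a signed permutation matrix, so $\Lambda^\ast$ is obtained from $\Lambda$ by a column permutation together with sign flips (and $\Phi^\ast$ correspondingly); modding out the permutation, the only remaining freedom is a polarity reversal, which establishes the claim. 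The step needing the most care is the reduction in the first paragraph: connecting $\Sigma(\theta^\ast)=\Sigma(\theta)$ to the existence of the transforming matrix $T$ through the full-column-rank property of $\Lambda$, and making precise that the resulting discreteness of the structure-preserving solution set is exactly what ``locally rotationally unique up to polarity reversal'' means in Appendix~\ref{sec:def_rot_unique}. The combinatorial steps themselves are short.
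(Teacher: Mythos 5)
Your proposal is correct, but it takes a genuinely different route from the paper. The paper's proof verifies two sufficient conditions adapted from \citet{Peeters2012} --- each column of $\Lambda$ has at least $d-1$ zeros, and each submatrix $\Lambda^{[j]}$ (rows with a zero in column $j$, column $j$ deleted) has rank $d-1$ --- by permuting the rows so that the unique-child rows form a block-diagonal upper part of $P\Lambda$. You instead argue directly: the unique-child rows of $\Lambda$ are $1$-sparse, so the corresponding rows of the transformation matrix $T$ must be $1$-sparse as well, forcing $T$ to be a (signed) permutation times a diagonal, after which the unit-diagonal normalization of the factor covariance pins the diagonal to $\pm 1$; your preliminary full-column-rank observation (the $d\times d$ submatrix on unique-child rows is diagonal and invertible) even upgrades the claim to uniqueness among all decompositions with the same $\Sigma$ and sparsity, not merely among rotations of a given $\theta$. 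Your argument is more elementary and self-contained, while the paper's buys brevity by citing known sufficient conditions. One small point of alignment with the formal definition in Appendix~\ref{sec:def_rot_unique}: there the constraint is $\mathcal{A}(\Lambda M)\subseteq\mathcal{A}(\Lambda)$ with no column permutation (so nonzero loadings are allowed to vanish), rather than your ``equal up to a column permutation.'' This is not a gap in substance --- under the subset condition your unique-child step becomes even simpler, since the $k$-th row of $M$ can then be nonzero only in position $k$, making $M$ diagonal outright, with invertibility and $\mathrm{diag}(M^{-1}\Phi M^{-T})=I_d$ (with $\mathrm{diag}(\Phi)=I_d$) giving a signature matrix --- but you should state the reduction against that definition rather than against equality-up-to-permutation.
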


\begin{corollary} \label{cor:ct_local_rotation_unique}
Any $\hat{\theta}_k$ for $k \in [m]$, produced by Step~10 of the CT algorithm, is locally rotationally unique.
\end{corollary}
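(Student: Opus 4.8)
The plan is to reduce the statement to Corollary~\ref{cor:ucc_rotation} by checking that the structure $(\hat d_k, \hat A_k)$ fit in Step~10 always satisfies the unique child condition (Condition~\ref{con:ucc}). Concretely, I would argue that the combinatorial definition of an independent maximal clique is precisely engineered so that its image under Steps~4--9 is a loading support with a unique child in every column; once that is in place, Corollary~\ref{cor:ucc_rotation} applied to $\hat\theta_k$ finishes the proof, and since $k\in[m]$ is arbitrary it covers all models output by the algorithm.

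First I would unwind Steps~4--9: the algorithm sets $\hat d_k = \lvert\mathcal{C}_k\rvert$, where $\mathcal{C}_k = \{C_1,\dots,C_{\hat d_k}\}$ is the set of independent maximal cliques of $\mathcal{G}(X,\hat E(\tau_k))$, and it places $(i,j)$ in $\hat A_k$ exactly when vertex $i$ lies in $C_j$; hence the child set of the $j$-th fitted factor is $\te{ch}(\hat L_j) = \{i\in[p] : (i,j)\in\hat A_k\} = C_j$. By Definition~\ref{def:independent_maximal_clique}, each $C_j$ contains a unique member $X_{i_j}$, i.e.\ a vertex $i_j$ that belongs to no maximal clique other than $C_j$, hence to no other independent maximal clique $C_{j'}$, $j'\neq j$. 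Translating back through the correspondence above, $i_j\in C_j=\te{ch}(\hat L_j)$ while $i_j\notin C_{j'}=\te{ch}(\hat L_{j'})$ for all $j'\neq j$, so $i_j\in U_j \coloneqq \te{ch}(\hat L_j) - \bigcup_{j'\neq j}\te{ch}(\hat L_{j'})$, and thus $U_j\neq\emptyset$ for every $j\in[\hat d_k]$. That is exactly the unique child condition for the structure on which $\hat\theta_k$ is estimated, so Corollary~\ref{cor:ucc_rotation} yields that $\hat\theta_k$ is locally rotationally unique.

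The only delicate point, and the step I expect to need the most care, is the mild mismatch between Condition~\ref{con:ucc} being phrased via $\lambda_{ij}\neq 0$ and the fitted matrix $\hat\Lambda_k$: the estimator in Step~10 is constrained only by $\lambda_{ij}=0$ off $\hat A_k$, so in principle it could drive a unique-member loading $\hat\lambda_{i_j j}$ to zero and shrink the effective support below $\hat A_k$. I would handle this either by a genericity remark---for the MLE the event that an unconstrained entry equals exactly zero is Lebesgue-null, and moreover an independent maximal clique typically carries several unique members, so the condition is robust to a single vanishing loading---or by noting that local rotational uniqueness depends only on the zero pattern imposed on the loadings and that imposing further zeros among the allowed entries only tightens the constraints on an admissible rotation, hence cannot undo the uniqueness guaranteed by Corollary~\ref{cor:ucc_rotation}. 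Everything else is a direct dictionary translation between the clique vocabulary of Definition~\ref{def:independent_maximal_clique} and the loading-support vocabulary of Condition~\ref{con:ucc}.
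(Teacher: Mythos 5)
Your proof takes essentially the same route as the paper: Steps 4--9 turn the unique member of each independent maximal clique $C_j$ into a unique child for column $j$ of the imposed support $\hat A_k$, so Condition~\ref{con:ucc} holds for that sparsity pattern and Corollary~\ref{cor:ucc_rotation} gives local rotational uniqueness for every $k\in[m]$. The ``delicate point'' you raise about a fitted loading on a unique member being driven exactly to zero is not treated in the paper either (its proof implicitly reads the claim with respect to the imposed pattern $\hat A_k$ rather than the realized support of $\hat\Lambda_k$), and of your two fixes the genericity remark is the safer one, since extra zeros can in fact break the rank condition used in the proof of Corollary~\ref{cor:ucc_rotation} rather than merely tighten the constraints.
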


First we note that all matrix factorizations will have a polarity reversal on columns or rows as a source of non-uniqueness unless the signs of the main diagonal (or a permutation thereof) are fixed and non-zero.
Since the model in Equation~\eqref{model} makes no assumptions regarding the signs in $\Lambda$, local rotational uniqueness is the best type of rotational uniqueness that can be established.
Second, note that Corollary~\ref{cor:ct_local_rotation_unique} holds regardless if Condition~\ref{con:ucc} is true in the population structure.
Thus the CT algorithm can be used as a model approximation tool for finding locally rotationally unique structures.

\subsection{Discussion of Assumptions} \label{sec:assumptions}

We discuss the practicality of our thresholdability and unique child assumptions and how they relate to common factor analytic designs.
Regarding the thresholdability of $\theta$, several widely used factor analysis designs either meet the assumption outright, or under mild conditions.
These stem from a technical necessary and sufficient condition for thresholdability presented in Lemma~\ref{generalSufficient} in the Appendix.
Relevant to our discussion are the following corollaries to the lemma, which we discuss here.

\begin{corollary} \label{cor:orthogonal_factors}
If $\Phi = I_d$, then $\theta$ is thresholdable.
\end{corollary}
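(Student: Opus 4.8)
The plan is to derive the corollary from the necessary-and-sufficient characterization of thresholdability (Lemma~\ref{generalSufficient}) by checking that its hypotheses specialize cleanly when $\Phi = I_d$. The starting point is the correlation expansion in Equation~\eqref{eq:Sigmaexp}: with $\Phi = I_d$ we get $\wt{\Sigma}(\theta) = \wt{\Lambda}\wt{\Lambda}^T + \wt{\Omega}$, and since $\wt{\Omega}$ is diagonal it contributes nothing off the diagonal, so for $i \neq j$ one has $\rho_{ij} = \sum_{k=1}^{d}\wt{\lambda}_{ik}\wt{\lambda}_{jk}$. Because $D_{\sigma}^{-1}$ is a positive diagonal matrix, $\wt{\Lambda}$ has exactly the same support as $\Lambda$, so $\wt{\lambda}_{ik}\neq 0$ iff $k\in\Pi_i$; hence this sum effectively ranges only over the shared-parent set $\Pi_i\cap\Pi_j$.

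The first step is to dispatch the $E_0^c$ side. For any pair $(i,j)\in E_0^c$ we have $\Pi_i\cap\Pi_j=\emptyset$, so every summand $\wt{\lambda}_{ik}\wt{\lambda}_{jk}$ vanishes and $\rho_{ij}=0$ exactly; therefore $\max_{(k,l)\in E_0^c}\lvert\rho_{kl}\rvert = 0$ and the left inequality in \eqref{e_thresholdability} holds for every $\tau_0>0$. The remaining task is to pick $\tau_0$ below $\min_{(i,j)\in E_0}\lvert\rho_{ij}\rvert$, which amounts to ruling out exact cancellation among the shared loadings of a sibling pair. When a pair shares exactly one parent this is immediate, since $\rho_{ij}$ is then a product of two nonzero loadings; the case of two or more shared parents is precisely where I would invoke the hypotheses of Lemma~\ref{generalSufficient} (and it is automatic under, e.g., the standard convention that loadings on a given factor share a sign). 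Granting this, $\min_{(i,j)\in E_0}\lvert\rho_{ij}\rvert>0$, so any $\tau_0\in\bigl(0,\ \min_{(i,j)\in E_0}\lvert\rho_{ij}\rvert\bigr)$ witnesses Definition~\ref{def:thresholdable}.

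The point requiring care — and the place I expect to lean on the full strength of Lemma~\ref{generalSufficient} rather than on orthogonality per se — is that $\Phi = I_d$ by itself only forces the non-sibling correlations to zero; on its own it does not bound the sibling correlations away from zero when a pair has multiple common parents whose loadings can oppose in sign. Everything else reduces to the one-line identity $\rho_{ij}=\sum_{k}\wt{\lambda}_{ik}\wt{\lambda}_{jk}$ together with elementary support bookkeeping, so the proof should be short once the cancellation clause is handled by the cited lemma.
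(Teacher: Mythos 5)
Your route is essentially the paper's: computing $\rho_{ij}=\sum_{k}\wt{\lambda}_{ik}\wt{\lambda}_{jk}$ under $\Phi=I_d$ is exactly the specialization of Lemma~\ref{generalSufficient}, in which the submatrices $\Phi_{AB},\Phi_{CB},\Phi_{AC},\Phi_{EF}$ of $I_d$ (taken over disjoint index sets) vanish and $\Phi_{CC}$ is an identity, so the left side of \eqref{gsIneq} collapses to $0$ and the right side becomes $\min_{(i,j)\in E_0}\lvert \wt{\Lambda}_{iC}\wt{\Lambda}_{jC}^T\rvert$ with $C=\Pi_i\cap\Pi_j$. Your handling of the $E_0^c$ pairs and the support bookkeeping coincide with that reduction.

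The one step that does not close as you wrote it is the one you flag: positivity of $\min_{(i,j)\in E_0}\lvert\rho_{ij}\rvert$. Your plan to ``invoke the hypotheses of Lemma~\ref{generalSufficient}'' is empty, because that lemma is an unconditional equivalence with no side hypotheses, and its inequality \eqref{gsIneq} \emph{is} thresholdability, so appealing to it at this point is circular. The paper's own proof simply asserts that $0<\min_{(i,j)\in E_0}\lvert\wt{\Lambda}_{iC}\wt{\Lambda}_{jC}^T\rvert$ ``trivially holds,'' i.e.\ it implicitly assumes no sibling pair's shared loadings cancel exactly, $\sum_{k\in\Pi_i\cap\Pi_j}\wt{\lambda}_{ik}\wt{\lambda}_{jk}\neq 0$. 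That is automatic when $\lvert\Pi_i\cap\Pi_j\rvert=1$ (e.g.\ the independent cluster structure of Corollary~\ref{cor:independent_cluster}) and holds generically, but with two or more shared parents and sign-mixed loadings it can fail, in which case some $\rho_{ij}=0$ for $(i,j)\in E_0$ and no $\tau_0$ satisfying \eqref{e_thresholdability} exists. So your caution correctly exposes an implicit non-cancellation assumption behind the paper's ``trivially holds''; the fix is to state that assumption (or a shared-sign convention, or restriction to single shared parents) explicitly, not to defer to the cited lemma.
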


That is, if we have the orthogonal factor analysis design, then thresholdability is met.
Another common scenario is when $\Lambda$ has exactly one non-zero entry per row.
This is called ``independent cluster structure'' \citep{Harris1964} or ``perfect simple structure'' \citep{Jennrich2006}.
Such structures lead to a simplification of the thresholdability condition:

\begin{corollary} \label{cor:independent_cluster}
If $\Lambda$ has exactly one non-zero entry per row, then $\theta$ is thresholdable if
\begin{equation}
\underset{(k, l) \in E_0^c}{\max} \lvert \wt{\lambda}_{ke} \wt{\lambda}_{lf} \phi_{ef}  \rvert < \underset{(i, j) \in E_0}{\min} \lvert \wt{\lambda}_{ic} \wt{\lambda}_{jc} \rvert,
\end{equation}
where $\Pi_i = \Pi_j = \{c\}$, $\Pi_k = \{e\}$, and $\Pi_l = \{f\}$.
\end{corollary}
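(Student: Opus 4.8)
The plan is to reduce the thresholdability criterion of Definition~\ref{def:thresholdable} to an explicit evaluation of the off-diagonal entries of $\wt{\Sigma}(\theta)$ under the independent cluster structure. First I would restate the hypothesis ``$\Lambda$ has exactly one non-zero entry per row'' as $\lvert\Pi_i\rvert = 1$ for every $i\in[p]$, and note that this transfers verbatim to $\wt{\Lambda}=D_\sigma^{-1}\Lambda$ because the two matrices have the same support; write $\Pi_i=\{p(i)\}$ for the unique parent of $X_i$. I would also adopt the standard normalization that $\Phi$ is a correlation matrix, $\phi_{kk}=1$, which is exactly the convention under which the right-hand side of the claimed inequality is written without a $\phi_{cc}$ factor.

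Next I would compute $\rho_{ij}$ for $i\neq j$ directly from Equation~\eqref{eq:Sigmaexp}. Since $\wt{\Omega}$ is diagonal, the off-diagonal part of $\wt{\Sigma}(\theta)$ equals that of $\wt{\Lambda}\Phi\wt{\Lambda}^T$, so
\[
\rho_{ij}=\sum_{e=1}^{d}\sum_{f=1}^{d}\wt{\lambda}_{ie}\,\phi_{ef}\,\wt{\lambda}_{jf}=\wt{\lambda}_{i,p(i)}\,\phi_{p(i),p(j)}\,\wt{\lambda}_{j,p(j)},
\]
as the only surviving term has $e=p(i)$ and $f=p(j)$. Splitting into cases: if $(i,j)\in E_0$ then the singletons $\Pi_i,\Pi_j$ must coincide, say $\Pi_i=\Pi_j=\{c\}$, and $\phi_{cc}=1$ gives $\lvert\rho_{ij}\rvert=\lvert\wt{\lambda}_{ic}\wt{\lambda}_{jc}\rvert$; if $(k,l)\in E_0^c$ then $\Pi_k=\{e\}$ and $\Pi_l=\{f\}$ with $e\neq f$, giving $\lvert\rho_{kl}\rvert=\lvert\wt{\lambda}_{ke}\wt{\lambda}_{lf}\phi_{ef}\rvert$.

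Finally I would appeal to Definition~\ref{def:thresholdable}: a threshold $\tau_0$ with $\max_{(k,l)\in E_0^c}\lvert\rho_{kl}\rvert<\tau_0<\min_{(i,j)\in E_0}\lvert\rho_{ij}\rvert$ exists if and only if $\max_{(k,l)\in E_0^c}\lvert\rho_{kl}\rvert<\min_{(i,j)\in E_0}\lvert\rho_{ij}\rvert$, and substituting the two expressions just obtained turns this into exactly the inequality in the statement; hence the inequality is not only sufficient (as claimed) but also necessary for thresholdability under the independent cluster structure, and the corollary likewise drops out of the general characterization in Lemma~\ref{generalSufficient}. I expect every step to be routine; the only places that merit a sentence of care are the transfer of the unique-parent property from $\Lambda$ to $\wt{\Lambda}$ and the normalization $\phi_{kk}=1$ (without which a $\phi_{cc}$ would remain on the right). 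Degenerate cases do not interfere: $d<p$ together with the no-zero-row/column assumption on $\Lambda$ forces $E_0\neq\emptyset$, and if $d=1$ then $E_0^c=\emptyset$ and the condition is satisfied by any $\tau_0\in(0,\min_{(i,j)\in E_0}\lvert\rho_{ij}\rvert)$.
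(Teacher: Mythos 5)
Your proposal is correct and takes essentially the same route as the paper: both reduce the general correlation expression (Lemma~\ref{generalSufficient}, which you effectively rederive by computing $\rho_{ij}$ directly under singleton parent sets with $\phi_{cc}=1$) and then apply Definition~\ref{def:thresholdable}. Your added observation that the condition is also necessary is consistent with the paper, since Lemma~\ref{generalSufficient} is an if-and-only-if characterization.
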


Corollaries~\ref{cor:orthogonal_factors} and~\ref{cor:independent_cluster} involve desirable properties of factor analytic designs.
It has been suggested that latent variable models should be designed such that the latent factors be distinguishable from one another, or that they are not too highly correlated \citep{Whitely1983}.
If the latent factors are too highly correlated, then a factor solution with less dimensions may be better suited.

As a common design in educational and psychological test construction \citep{Hattie1985, Anderson1988}, an independent cluster structure yields mutually exclusive subsets of children for each latent variable.
In other words, each observed variable provides a ``measurement'' of a single latent variable alone.
In contrast, our unique child condition (Condition~\ref{con:ucc}) is much more general, only requiring a single observed variable to serve as the sole measurement.
Many other latent variable algorithms only focus on the independent cluster structure \citep{Scheines1998, Jennrich2001, Jennrich2006, Silva2006}, or require 3 to 4 observed variables to serve as unique children \citep{Shimizu2009, Kummerfeld2016}.

Additionally, we examine the unique child condition under a random graph model for $\mathcal{A}(\Lambda)$, in which edges are independently connected between any $X_i$ and $L_j$ with probability~$\alpha$.
Let $m=d \alpha$ be the expected number of parents for any $X_i$. One can show that the unique child condition holds with probability $\geq 1-d \exp[-\alpha p(1-\alpha)^d]$.
Consequently, if $\log(\log d)\ll m\ll \min\{\log(m p/d),d\}$, then the unique child condition holds with high probability.

The assumption of the unique child condition does have a few limitations as it precludes certain structures from being perfectly discovered.
Examples of these structures are illustrated in Appendix~\ref{supp:figures} (Figure~\ref{fig:ucc_counterexamples}).
However, even under such settings, we will show through simulation that the CT algorithm will select a structure close to the true structure despite the unique child condition not holding.

\section{Simulation Studies} \label{sec:simulations}

\subsection{Low-Dimensional Settings}

Our first set of simulations were conducted in low-dimensional settings.
Here, we compared the CT algorithm against three other methods: (1) EFA, (2) EFA-LASSO, and (3) EFA-MCP (all described in Section~\ref{sec:introduction}).
Note that these EFA methods all require $d$ as an input, thus we use the CT algorithm to give these EFA methods a set of $d$ to work with.
This was to make the comparison as fair as possible, rather than than using ad hoc choices.
More specifically, we ran the CT algorithm to Step~5, where $d$ is estimated from the number of independent maximal cliques.
Thereafter, we replaced the support learning portion (Steps~6 through~9) of the algorithm with one of the EFA procedures.
Then the support of the model was saved from the EFA methods and resumed the algorithm from Step~12, where the MLE was estimated from the support and used for model selection.

We generated data sets from a zero-mean Gaussian distribution, with a covariance matrix $\Sigma$ parameterized by $\theta$.
The structure of $\Lambda$ followed an independent cluster structure (one non-zero entry per row).
We focused on this structure since it is the most common factor analysis design and it was the simulation design used in the studies proposing the penalized EFA methods \citep{Hirose2014a, Hirose2014b}.
The number of latent variables ($d$) was set to 3 with the number of children per latent variable set to 5 and $n = 1000$.
The non-zero entries of $\Lambda$ were drawn from a uniform distribution, $\lambda_{ij} \sim \text{Uniform}(0.6, 0.8)$.
We varied the magnitude of the off-diagonals in $\Phi$, as it is a key factor in whether or not $\theta$ is thresholdable for these structures, as shown by Corollary~\ref{cor:independent_cluster}.
Their entries began with the range of $[0.6, 0.8]$, with a low-magnitude setting scaling these by 0.25 and a high-magnitude setting scaling these by 0.75.
The tuning parameters of the penalized EFA methods were left at the software package defaults, which were 30 tuning parameters for EFA-LASSO and a set of 270 tuning parameters for EFA-MCP.
We conducted 100 replications per condition.
Further details of the software and data generating settings can be found in Appendix~\ref{supp:simulation}.

We examined three outcomes to assess the performance of the methods: (1) The $F_1$ score of $\mathcal{A}(\hat{\Lambda})$, (2) the learned number of latent variables, and (3) the computational efficiency of each method.
A precise definition of the $F_1$ score of $\mathcal{A}(\hat{\Lambda})$ can be found in Appendix~\ref{sec:outcomes}.
To measure computational efficiency we simply counted the number of models each method estimated.
This was to avoid idiosyncratic differences between the software implementations of each method.
For the CT algorithm, this is simply the number of unique structures obtained by the sequence of $\tau_k$.
For EFA, this translates to the number of unique $d$ obtained by the sequence of $\tau_k$.
For EFA-LASSO and EFA-MCP, this is the number of tuning parameter combinations to search over (30 for LASSO, 270 for MCP), per unique $d$ in the sequence of $\tau_k$.

The results of this simulation are displayed in Figure~\ref{fig:icml1}.
CT and EFA-MCP have the best $F_1$ scores (very close to 1.0), with EFA-LASSO at around 0.75 and EFA at 0.5 across both conditions.
All methods learned the number of latent variables correctly in all data sets, and thus were omitted from the figure.
For computational efficiency, the CT algorithm estimated a substantially less amount of models compared to the penalized EFA methods.
EFA showed the best computational efficiency, but in contrast had the worst $F_1$ score.
These results demonstrate that in low-dimensional settings, that the CT algorithm performs with near perfect accuracy along with EFA-MCP, however with substantial computational savings.

\begin{figure}[t]
\centering
\includegraphics[width=0.4\textwidth, keepaspectratio]{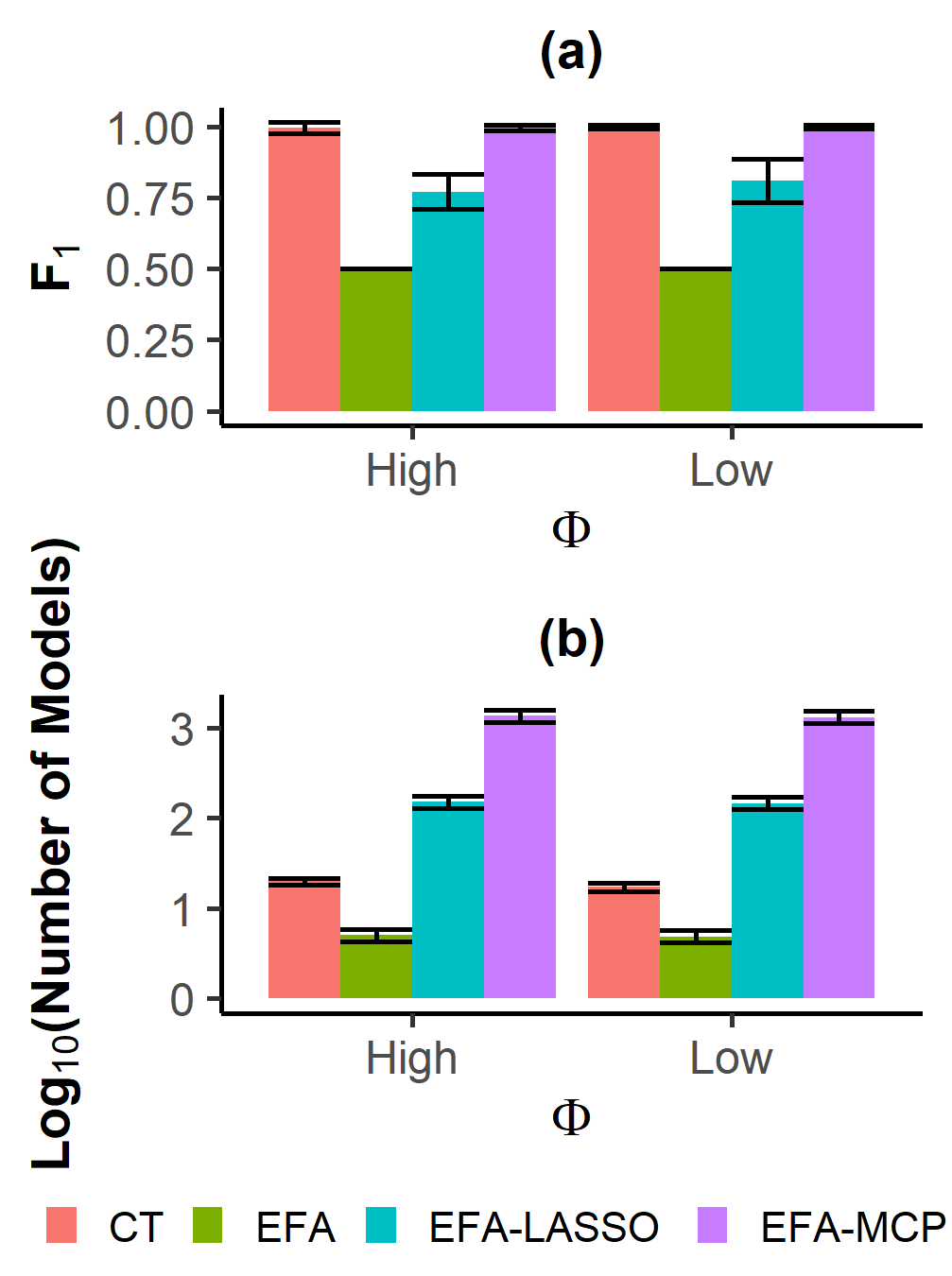}\\
\caption{Averages of the performance metrics for the low-dimensional simulation. Error bars depict $\pm1$ standard deviation.}
\label{fig:icml1}
\end{figure}

\subsection{High-Dimensional Settings} \label{sec:simulation_high_dim}

For the high-dimensional settings, we examined the scenario where both $p$ and $d$ grow proportionally with $n$, and $n < p$.
We examined three $(n, p, d)$ settings where $n \in\{ 250, 500, 1000\}$, and set $p = 1.5n$ and $d = 0.1n$.
In addition, we studied two conditions where the key assumptions of the CT algorithm would be violated: (1) thresholdability, which we violated using high-magnitude off-diagonals in $\Phi$ as in the previous simulation, and (2) the unique child condition.
We violated the unique child condition by starting with the independent cluster structure, then randomly selecting 75\% of the latent variables to have no unique children, whose children were all given an extra random parent.
To isolate the effect of the unique child condition from that of thresholdability, we ensured thresholdability was always met in the latter set of simulations by setting $\Phi = I_d$ (Corollary~\ref{cor:orthogonal_factors}).
Further details regarding the simulation settings can be found in Appendix~\ref{supp:simulation} and additional results using more varied assumption violations can be found in Appendix~\ref{supp:figures} (Figure~\ref{fig:thresh_ucc_range}).

Under these high $p$ settings, both the EFA and penalized EFA methods are prohibitively slow, thus could not be used as comparisons for this study.
Further, MLE routines also do not complete in a reasonable amount of running time, hence, we omitted the estimation step of the CT algorithm (Step~10).
Rather, a final model structure was chosen as the one with the minimum Hamming distance (HD) among the candidate thresholds $\{\tau_k$, $k\in[m]\}$ (a precise definition of HD can be found in Appendix~\ref{sec:outcomes}).
As before, we examined the $F_1$ score, $\hat{d}$, and computational efficiency as the outcomes for this study.

\begin{figure*}[t]
\centering
\includegraphics[width=0.9\textwidth, keepaspectratio]{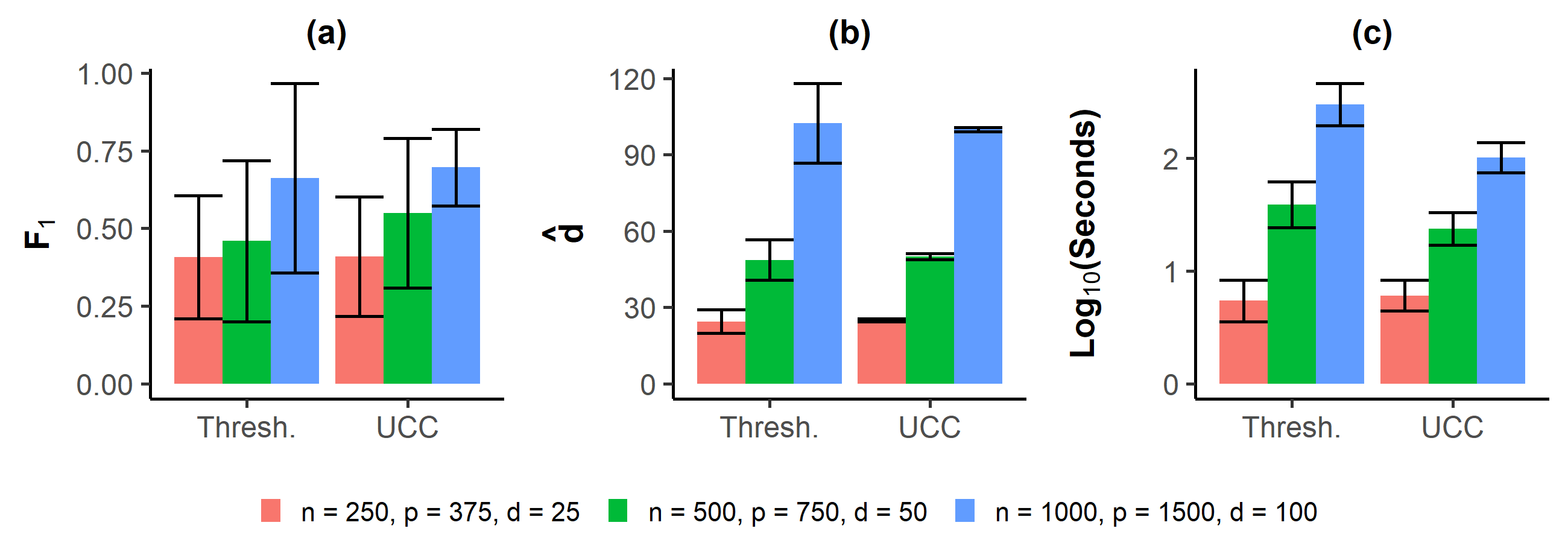}
\caption{Averages of the performance metrics for the high-dimensional simulation. Error bars depict $\pm1$ standard deviation. ``Thresh.'' refers to the high-magnitude $\Phi$ condition where thresholdability is violated, and ``UCC'' refers to the condition where the unique child condition is violated.}
\label{fig:icml2}
\end{figure*}

The results are displayed in Figure~\ref{fig:icml2}.
We first note that in the thresholdability violation condition, thresholdability was indeed violated in at least 99\% of the data sets for each of the $n = 250$, 500, and 1000 configurations.
However, we can see the $F_1$ score become more accurate with $n$ despite these challenging conditions and the proportional growth in $p$.
The estimated number of latent variables $(\hat{d})$ was also fairly accurate on average across all conditions, confirming the CT algorithm is capable of determining the number of latent factors automatically even in such challenging high-dimensional settings.
Unsurprisingly, the computational time increased with $p$, but remained reasonable even at $p = 1500$.

\section{Real Data Application} \label{sec:real_data}

We examined a widely used factor analysis data set comprised of intelligence test scores of $n = 301$ middle school students \citep{Holzinger1939}, and compared the performance of the CT algorithm with EFA and the penalized EFA methods.
The data consist of 9 variables designed to measure 3 factors of intelligence.
These were a spatial factor, a verbal factor, and a speed factor.
The hypothesized structure of this design was an independent cluster structure between these three factors.
Again, for a fair comparison, we input the same set of $d$ values produced in the CT algorithm to each of the EFA methods as we did in the simulation studies.

We display the results in Table~\ref{realDataTable}.
We first checked the HD between the solution path of a method and the hypothesized model structure.
The minimum HD over the solution path was zero only for the CT algorithm, indicating that the hypothesized model was perfectly recovered within its solution path only, and not any other method.
Moreover, the CT algorithm identified the hypothesized structure with a much smaller set of candidate models.
We selected a structure via BIC for each method and used 10-fold CV to calculate a test-data log-likelihood and evaluate the structure.
The results for the test data log-likelihood are similar across all methods except EFA, which was much worse.
Despite the comparable performance between the CT algorithm and the sparse EFA methods, the CT algorithm obtained these results with much improved computational efficiency, measured by the number of candidate models evaluated.

\begin{table}[!t]
\centering
\begin{tabular}{cccccc}
\hline
\noalign{\vskip 2px}
Method & HD(min.) & $\hat{d}$ & Test LL & Models\\
\hline
\noalign{\vskip 2px}
CT Algorithm & 0 & 4  & -3749.60 & 13   \\
EFA          & 6 & 2  & -3823.14 & 4    \\
EFA-LASSO    & 6 & 3  & -3751.82 & 120  \\
EFA-MCP      & 3 & 4  & -3751.37 & 1080 \\
\hline
\noalign{\vskip 2px}
\end{tabular}
\caption{Results of real data example. HD(min.) denotes the minimum HD to the hypothesized structure across all solutions, Test LL refers to the test-data log-likelihood, and Models denotes the number of models evaluated per method.}
\label{realDataTable}
\end{table}

As with most factor analytic designs, the hypothesized structure followed the unique child condition.
An illustration of the hypothesized structure and all the selected structures by the four competing methods are displayed in Appendix~\ref{supp:figures} (Figure~\ref{fig:holzinger}).
Both the EFA-LASSO and EFA-MCP methods selected structures that followed the unique child condition, despite the fact that these methods are not developed under this assumption.
These convergent results lend empirical support for the unique child condition holding for factor analysis structures within this data set, as designed.

\section{Concluding Remarks} \label{sec:conclusion}

Overall, the CT algorithm is a promising method for learning factor analysis structures.
In this article, we motivated the algorithm using thresholded correlation graphs, and established conditions for the clique mapping procedure, parameter uniqueness, and asymptotic consistency.
In addition, the CT algorithm yields a method of learning $d$, which the EFA counterparts lack.
In our simulation studies, the CT algorithm performed nearly perfectly in low-dimensional settings, and showed robust results in high-dimensional settings.
Further, the computational efficiency of the CT algorithm is unrivaled relative to the EFA-LASSO and EFA-MCP methods, as it checks substantially less models.

There are some limitations of the CT algorithm, mainly the assumptions of thresholdability and the unique child condition.
While we demonstrated that the CT algorithm can be robust to violations of these assumptions in practice, our statistical consistency results depends on these assumptions being true in the population.
Future work can focus on the relaxation of these assumptions.

We also note some computational limitations for the high-dimensional ($n < p$) regime for parameter estimation.
Both penalized and traditional MLE estimation procedures have fairly long computation routines.
Since the CT algorithm relies on external existing estimation method to provide parameter estimates, it is subsequently limited by the existing technology in this area.
Thus the estimation portion of our algorithm will also benefit from computational advances on this topic.

\section*{Acknowledgments}

This work was supported by NSF grant DMS-1952929.

\bibliography{references}
\bibliographystyle{icml2023}

\newpage
\appendix
\onecolumn
\section{Proofs and Additional Results} \label{app:proofs}

\subsection{Proof of Lemma~\ref{lem:indep_maximal_clique}} \label{app:indep_maximal_clique}

\begin{proof}[\unskip\nopunct]
First, we prove that $\te{ne}(X_i)$ must be a maximal clique by contradiction.
Suppose $\te{ne}(X_i)$ is a clique, but not maximal.
Then $\te{ne}(X_i)$ can be extended by another node $X_j \notin \te{ne}(X_i)$, such that the union $X_j \cup \te{ne}(X_i)$ is a clique.
This implies that there is an edge between $X_i$ and $X_j$ and thus $X_j\in\te{ne}(X_i)$.
This leads to a contradiction, and therefore, $\te{ne}(X_i)$ must be maximal.
Second, we prove that $X_i$ is not a part of any other maximal clique, once again by contradiction.
Suppose that $X_i \in A$, where $A$ is a maximal clique and $A \neq \te{ne}(X_i)$.
By the definition of $\te{ne}(X_i)$, we must have $A \subset \text{ne}(X_i)$, i.e., a proper subset of $\text{ne}(X_i)$, which contradicts the hypothesis that $A$ is maximal.
Therefore, $X_i$ is not a part of any other maximal clique, making $\te{ne}(X_i)$ an independent maximal clique.
This completes the proof of the first statement.

Now we prove the second statement.
Let $X_i$ be any unique member of an independent maximal clique $C$. 
Suppose $\te{ne}(X_i)$ is not a subset of $C$, which means there is a vertex $X_j\notin C$ but is a neighbor of $X_i$.
Then $\{X_i,X_j\}$ either is a maximal clique or can be grown to a maximal clique $C'\ne C$.
This contradicts the fact that $X_i$ is a unique member of $C$.
Therefore, $\te{ne}(X_i)$ must be a subset of $C$ and thus is a clique. By the first statement of this lemma, $\te{ne}(X_i)$ is also an independent maximal clique and thus we must have $\te{ne}(X_i)=C$.
\end{proof}

\subsection{Necessary and Sufficient Condition for Thresholdability}

\begin{lemma} \label{generalSufficient}
Recall the definitions of $E_0$ and $E_0^c$ in Equations~\eqref{e_definition_E} and~\eqref{e_definition_Ec}, respectively.
A set of parameters $\theta$ is thresholdable if and only if:
\begin{equation} \label{gsIneq}
\underset{(k, l) \in E_0^c}{\max} \lvert \wt{\Lambda}_{kE} \Phi_{EF} \wt{\Lambda}_{lF}^T \rvert < \underset{(i, j)\in E_0}{\min} \lvert \wt{\Lambda}_{iA} \Phi_{AB} \wt{\Lambda}_{jB}^T + \wt{\Lambda}_{iC} \Phi_{CB} \wt{\Lambda}_{jB}^T + \wt{\Lambda}_{iA} \Phi_{AC} \wt{\Lambda}_{jC}^T + \wt{\Lambda}_{iC} \Phi_{CC} \wt{\Lambda}_{jC}^T \rvert,
\end{equation}
where $A = A(i, j) = \Pi_i - \Pi_j$, $B = B(i, j) = \Pi_j - \Pi_i$, $C = C(i, j) = \Pi_i \cap \Pi_j$, $E = \Pi_k$, and $F = \Pi_l$.
\end{lemma}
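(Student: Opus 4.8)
The plan is to first strip Definition~\ref{def:thresholdable} down to a single max/min inequality, and then to read off closed forms for the two sides from the covariance expansion $\wt{\Sigma}(\theta) = \wt{\Lambda}\Phi\wt{\Lambda}^T + \wt{\Omega}$ in~\eqref{eq:Sigmaexp}.

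\emph{Step 1 (reduction).} A threshold $\tau_0$ as in~\eqref{e_thresholdability} exists if and only if $\max_{(k,l)\in E_0^c}\lvert\rho_{kl}\rvert < \min_{(i,j)\in E_0}\lvert\rho_{ij}\rvert$: given this strict inequality, any $\tau_0$ strictly between the two numbers works, and conversely any valid $\tau_0$ forces it. (If $E_0^c=\emptyset$, e.g. $d=1$, thresholdability is vacuous and~\eqref{gsIneq} holds under the empty-max convention.) Hence it suffices to express both sides in terms of $\theta$.

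\emph{Step 2 (the two correlation formulas).} Since $\wt{\Omega}$ is diagonal, for $i\ne j$ the entry $\rho_{ij}$ of $\wt{\Sigma}(\theta)$ equals $(\wt{\Lambda}\Phi\wt{\Lambda}^T)_{ij}=\sum_{e\in\Pi_i}\sum_{f\in\Pi_j}\wt{\lambda}_{ie}\phi_{ef}\wt{\lambda}_{jf}$, because $\wt{\lambda}_{ie}=0$ whenever $e\notin\Pi_i$. For $(k,l)\in E_0^c$ we have $\Pi_k\cap\Pi_l=\emptyset$; writing $E=\Pi_k$, $F=\Pi_l$, this sum is exactly $\wt{\Lambda}_{kE}\Phi_{EF}\wt{\Lambda}_{lF}^T$, the left-hand side of~\eqref{gsIneq}. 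For $(i,j)\in E_0$, set $C=\Pi_i\cap\Pi_j\ne\emptyset$, $A=\Pi_i-\Pi_j$, $B=\Pi_j-\Pi_i$, so $\Pi_i=A\sqcup C$ and $\Pi_j=B\sqcup C$; splitting the double sum over the four blocks $A\times B$, $C\times B$, $A\times C$, $C\times C$ gives precisely the four-term expression on the right-hand side of~\eqref{gsIneq}, with the convention that a term vanishes whenever its index set is empty (which can happen for $A$ or $B$ but never for $C$).

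\emph{Step 3 (conclusion).} Taking absolute values, then $\max$ over $E_0^c$ and $\min$ over $E_0$ in the identities of Step 2, the equivalence from Step 1 becomes exactly the statement: $\theta$ is thresholdable $\iff$~\eqref{gsIneq}. The argument is essentially bookkeeping; there is no genuinely hard step, and the only points needing care are that the diagonal noise matrix $\wt{\Omega}$ never contributes (only off-diagonal correlations are compared) and that the block decomposition correctly handles empty $A$ or $B$ while $C$ is guaranteed nonempty for pairs in $E_0$.
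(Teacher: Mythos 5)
Your proposal is correct and follows essentially the same route as the paper: both express $\rho_{ij}=(\wt{\Lambda}\Phi\wt{\Lambda}^T)_{ij}$ (the diagonal $\wt{\Omega}$ not contributing off the diagonal) and split the parent sets into $A$, $B$, $C$ to obtain the four-term formula for shared-parent pairs and the single term for non-shared pairs, then read off the definition of thresholdability. The only cosmetic difference is that the paper derives the formula by writing the bivariate structural model for $(\wt{X}_i,\wt{X}_j)$ and computing its covariance, whereas you expand the $(i,j)$ entry of $\wt{\Lambda}\Phi\wt{\Lambda}^T$ directly; your explicit handling of empty $A$ or $B$ is a minor point the paper leaves implicit.
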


\begin{proof}
First it will be convenient to partition the parent variables of any pair $(X_i, X_j)$ as $\Pi_i \cup \Pi_j = \{L_A, L_B, L_C\}$, where:
\begin{equation}
\begin{aligned}
A &= \Pi_i - \Pi_j \\
B &= \Pi_j - \Pi_i \\
C &= \Pi_i \cap \Pi_j.
\end{aligned}
\end{equation}
Then we may re-cast Equation~\eqref{model} for any pair $(\widetilde{X}_i, \widetilde{X}_j)$ as follows:
\begin{equation}
\begin{bmatrix} \wt{X}_i^{\ps} \\[5pt] \wt{X}_j^{\ps} \end{bmatrix} =
\begin{bmatrix} \wt{\Lambda}_{iA} & \bm{0} & \wt{\Lambda}_{iC} \\[5pt] \bm{0} & \wt{\Lambda}_{jB} & \wt{\Lambda}_{jC} \end{bmatrix}
\begin{bmatrix} L_A^{\phantom{T}} \\[5pt] L_B^{\phantom{T}} \\[5pt] L_C^{\phantom{T}} \end{bmatrix} +
\begin{bmatrix} \wt{\epsilon}_i^{\ps} \\[5pt] \wt{\epsilon}_j^{\ps} \end{bmatrix}.
\end{equation}
We then obtain the correlation of between $X_i$ and $X_j$ from this form as follows:
\begin{equation}
\Var\left( \bbm \wt{X}_i^{\ps} \\[5pt] \wt{X}_j^{\ps} \ebm \right) =
\begin{bmatrix} \wt{\Lambda}_{iA} & \bm{0} & \wt{\Lambda}_{iC} \\[5pt] \bm{0} & \wt{\Lambda}_{jB} & \wt{\Lambda}_{jC} \end{bmatrix}
\begin{bmatrix} \Phi_{AA}^{\phantom{T}} & \Phi_{AB}^{\phantom{T}} & \Phi_{AC}^{\phantom{T}} \\[5pt] \Phi_{BA}^{\phantom{T}} & \Phi_{BB}^{\phantom{T}} & \Phi_{BC}^{\phantom{T}} \\[5pt] \Phi_{CA}^{\phantom{T}} & \Phi_{CB}^{\phantom{T}} & \Phi_{CC}^{\phantom{T}} \end{bmatrix}
\begin{bmatrix} \wt{\Lambda}_{iA}^T & \bm{0} \\[5pt] \bm{0} & \wt{\Lambda}_{jB}^T \\[5pt] \wt{\Lambda}_{iC}^T & \wt{\Lambda}_{jC}^T \end{bmatrix} +
\begin{bmatrix} \wt{\omega}_i^{\ps} & 0\\[5pt] 0 & \wt{\omega}_j^{\ps} \end{bmatrix},
\end{equation}
for which we multiply through and take the off-diagonal to be:
\begin{equation} \label{corEq}
\rho_{ij} = \wt{\Lambda}_{iA} \Phi_{AB} \wt{\Lambda}_{jB}^T + \wt{\Lambda}_{iC} \Phi_{CB} \wt{\Lambda}_{jB}^T + \wt{\Lambda}_{iA} \Phi_{AC} \wt{\Lambda}_{jC}^T + \wt{\Lambda}_{iC} \Phi_{CC} \wt{\Lambda}_{jC}^T.
\end{equation}
Writing $\rho_{ij}$ in this way yields a useful decomposition with respect to the structure of the factor analysis model.
Specifically, this can be thought of as the correlation between $X_i$ and $X_j$ due to their non-shared parents being correlated $(\Phi_{AB})$, their non-shared parents being correlated with their shared parents $(\Phi_{AC}, \Phi_{CB})$ and simply having shared parents $(\Phi_{CC})$.
Thus, if $X_i$ and $X_j$ have no shared parents, then the index set $C$ is empty.
This reduces Equation~\eqref{corEq} to:
\begin{equation} \label{corEqWc}
\rho_{ij} = \widetilde{\Lambda}_{iA} \Phi_{AB} \widetilde{\Lambda}_{jB}^T.
\end{equation}
The result of Lemma~\ref{generalSufficient} follows by characterizing the definition of thresholdability \eqref{e_thresholdability} directly in terms of $\theta$.
That is, if for all $(X_i, X_j)$ that share parents and for all $(X_k, X_l)$ that do not share parents, $\theta$ is thresholdable if and only if:
\begin{equation}
\underset{(k, l) \in E_0^c}{\max} \lvert \wt{\Lambda}_{kE} \Phi_{EF} \wt{\Lambda}_{lF}^T \rvert < \underset{(i, j) \in E_0}{\min} \lvert \wt{\Lambda}_{iA} \Phi_{AB} \wt{\Lambda}_{jB}^T + \wt{\Lambda}_{iC} \Phi_{CB} \wt{\Lambda}_{jB}^T + \wt{\Lambda}_{iA} \Phi_{AC} \wt{\Lambda}_{jC}^T + \wt{\Lambda}_{iC} \Phi_{CC} \wt{\Lambda}_{jC}^T \rvert.
\end{equation}
\end{proof}

\subsection{Proof of Corollary~\ref{cor:orthogonal_factors}}

\begin{proof}[\unskip\nopunct]
From Equation~\eqref{gsIneq}, we can see that if $\Phi = I_d$, then the $\Phi_{AB}$, $\Phi_{CB}$, $\Phi_{AC}$, and $\Phi_{EF}$ matrices are all zero matrices, and $\Phi_{CC}$ is an identity matrix.
Thus Equation~\eqref{gsIneq} reduces to
\begin{equation}
0 < \underset{(i, j) \in E_0}{\min} \lvert \wt{\Lambda}_{iC} \wt{\Lambda}_{jC}^T \rvert,
\end{equation}
which trivially holds.
\end{proof}

\subsection{Proof of Corollary~\ref{cor:independent_cluster}}

\begin{proof}[\unskip\nopunct]
The defining characteristic of the independent cluster structure is that $\Lambda$ has exactly one non-zero entry.
This implies that each observed variable has only one latent variable parent.
Thus, the relevant parent sets will reduce to $\Pi_i = \Pi_j = \{c\}$, $\Pi_k = \{e\}$, and $\Pi_l = \{f\}$.
That is, each pair of observed variables will either have one shared parent, or no shared parents, but not both.
Hence for each pair of variables that share parents, the $\Phi_{AB}$, $\Phi_{CB}$, and $\Phi_{AC}$ matrices will not exist and $\Phi_{CC} = 1$.
Corollary~\ref{cor:independent_cluster} follows by simplifying Equation~\eqref{gsIneq} with these reductions.
\end{proof}

\subsection{Proof of Lemma~\ref{lem:structural_identifiability}}

\begin{proof}[\unskip\nopunct]
Recall the definition of $E_0$, which we re-state for convenience:
$$E_0 \coloneqq \{ (i, j) : \Pi_i \cap \Pi_j \neq \emptyset \}.$$
Pick any $k\in[d]$.
By definition, every $X_j\in\te{ch}(L_k)$ shares a common parent $L_k$ and thus $\te{ch}(L_k)$ forms a clique in $\mathcal{G}=\mathcal{G}(X, E_0)$.
Let $U_k$ be the set of unique children of $L_k$.
Under the unique child condition, $U_k\ne \emptyset$, so we can pick an $X_i\in U_k$. Then $X_i$ does not have an edge connected to any node other than $\te{ch}(L_k)$ by the definition of $E_0$.
This implies every clique that includes $X_i$ must be a subset of $\te{ch}(L_k)$.
Thus, $\te{ch}(L_k)$ is the only maximal clique that includes $X_i$, making it an independent maximal clique.
The above argument shows that each $\te{ch}(L_k), k\in[d]$ is an independent maximal clique.
Since $\cup_k \te{ch}(L_k)=X$, any other maximal clique, if it exists, cannot be independent, and thus, $\{\te{ch}(L_k):k\in[d]\}$ is the set of independent maximal cliques in $\mathcal{G}$.
\end{proof}

\subsection{Formal Definition of Rotational Uniqueness} \label{sec:def_rot_unique}

\begin{definition}[Rotational Uniqueness]
For a set of parameters $\theta=\{\Lambda, \Phi, \Omega\}$, denote a rotated set of parameters as $\theta_M = \{\Lambda M, M^{-1} \Phi M^{-T}, \Omega\}$, where  $M$ is an invertible $d\times d$ matrix.
Let us define a set of \textit{constraint preserving rotations} as 
\begin{equation}
\mathcal{M}_{CP} = \mathcal{M}_{CP}(\theta) \coloneqq \{{M} : \Sigma(\theta_M)=\Sigma(\theta),
\mathcal{A}(\Lambda M) \subseteq \mathcal{A}(\Lambda), \text{diag}(M^{-1} \Phi M^{-T}) = I_d\}.
\end{equation}
Then:
\begin{enumerate}
  \item If $\mathcal{M}_{CP} = \{I_d\}$, then $\theta$ is said to be \textit{globally rotationally unique}.
  \item If $\mathcal{M}_{CP}$ is a set of signature matrices, then $\theta$ is said to be \textit{locally rotationally unique}, where signature matrices are diagonal matrices whose diagonal elements are $\pm 1$.
\end{enumerate}
\end{definition}
Note that the condition $\mathcal{A}(\Lambda M) \subseteq \mathcal{A}(\Lambda)$ ensures that the zero constraints implied by $\mathcal{A}(\Lambda)$ are persevered.
Two local rotational uniqueness properties relevant to the CT algorithm are described in Corollaries~\ref{cor:ucc_rotation} and~\ref{cor:ct_local_rotation_unique}.

\subsection{Proof of Corollary~\ref{cor:ucc_rotation}} \label{sec:ucc_rotation_proof}

\begin{proof}[\unskip\nopunct]

Define an index set for the rows of $\Lambda\in\mathbb{R}^{p\times d}$ which have zeroes in the $j$th column as
$$Z_j \coloneqq \{i: \lambda_{ij} = 0\}\subseteq [p],$$
and define
$$\Lambda^{[j]} \coloneqq \Lambda_{Z_j,-j},$$
which is a submatrix of size $|Z_j|\times (d-1)$.
Adapted from \cite{Peeters2012}, two sufficient conditions for $\Lambda$ that yield local rotational uniqueness for our model are:
\begin{description}
  \item[Condition 1:] $\Lambda$ has at least $d - 1$ zeroes in each column.
  \item[Condition 2:] $\text{rank}(\Lambda^{[j]}) = d - 1$ for all $j \in [d]$.
\end{description}
An example of $\Lambda^{[j]}$ is as follows:
\begin{equation}
\Lambda = \begin{bmatrix}
\lambda_{11} & 0 & 0 \\
\lambda_{21} & \lambda_{22} & 0 \\
\lambda_{31} & 0 & 0 \\
0 & \lambda_{42} & 0 \\
0 & \lambda_{52} & \lambda_{53} \\
0 & \lambda_{62} & 0 \\
0 & 0 & \lambda_{73} \\
0 & 0 & \lambda_{83} \\
\lambda_{91} & 0 & \lambda_{93} \\
\end{bmatrix}, \quad
\Lambda^{[1]} = \begin{bmatrix}
\lambda_{42} & 0 \\
\lambda_{52} & \lambda_{53} \\
\lambda_{62} & 0 \\
0 & \lambda_{73} \\
0 & \lambda_{83} \\
\end{bmatrix}, \quad
\Lambda^{[2]} = \begin{bmatrix}
\lambda_{11} & 0 \\
\lambda_{31} & 0 \\
0  & \lambda_{73} \\
0  & \lambda_{83} \\
\lambda_{91} & \lambda_{93} \\
\end{bmatrix}, \quad
\Lambda^{[3]} = \begin{bmatrix}
\lambda_{11} & 0 \\
\lambda_{21} & \lambda_{22} \\
\lambda_{31} & 0 \\
0 & \lambda_{42} \\
0 & \lambda_{62} \\
\end{bmatrix}.
\end{equation}

These conditions can be seen to be satisfied by the unique child condition as follows.
Let $U_j$ be the set of unique children for $L_j$ as defined in Equation~\eqref{eq:uc_def}.
For all $j, k \in [d]$, and $i \in [p]$ we can re-cast $U_j$ as:
\begin{equation}
U_j = \{ i: \lambda_{ij} \neq 0, \lambda_{ik} = 0, k \neq j\},
\end{equation}
and let the index of non-unique variables be:
\begin{equation}
\overline{U} = \{ i: i \notin \cup_{j = 1}^d U_j\}.
\end{equation}
Let us permute the rows of $\Lambda$ according to an order that satisfies $(U_1, \dots, U_d, \overline{U})$.
Denoting a permutation matrix that yields such a row ordering as $P$, we have:
\begin{equation} \label{eq:block_diagonal}
P\Lambda = \begin{bmatrix}
\Lambda_{U_1 1} & & \\
& \ddots & \\
& & \Lambda_{U_d d} \\
\Lambda_{\overline{U} 1} & \cdots & \Lambda_{\overline{U} d}
\end{bmatrix}.
\end{equation}
That is, we can permute the rows of $\Lambda$ such that its upper part is block-diagonal with $d$ blocks.
Then there must be at least $d - 1$ zeroes in each column, satisfying Condition 1.
It is easily seen that $P\Lambda$ also satisfies Condition 2, as any $(P\Lambda)^{[j]}$ will also have its upper part be block-diagonal, and thus full rank $(d - 1)$.
\end{proof}

\subsection{Proof of Corollary~\ref{cor:ct_local_rotation_unique}}

\begin{proof}
As described in Section~\ref{sec:ct_algorithm_specific}, Steps~6 through~9 of the CT algorithm construct the support $\hat{A}_k$ deterministically based on a set of independent maximal cliques $\mathcal{C}_k$ (from Step~5).
Since by Definition~\ref{def:independent_maximal_clique} independent maximal cliques always have a unique node, the sparsity pattern in $\hat{A}_k$ is guaranteed to follow the unique child condition (Condition~\ref{con:ucc}).
By Corollary~\ref{cor:ucc_rotation}, $\hat{\theta}_k$ will be locally rotationally unique due to this pattern.
\end{proof}

\subsection{Proof of Theorem~\ref{theo:consistency}} \label{sec:proof_consistency}

\begin{proof}[\unskip\nopunct]
To obtain our result, we will leverage existing estimation error bounds on the event $\lvert r_{ij} - \rho_{ij} \rvert \geq \epsilon$ for some $\epsilon > 0$.
To do this it will be convenient to re-cast our event of interest to $\hat{E}(\tau_0) \neq E_0$.
For clarity, let us first consider the event $\hat{E}(\tau_0) = E_0$, which by definition, holds if and only if:
\begin{equation}
\Big( \bigcap_{(i, j) \in E_0} \lvert r_{ij} \rvert > \tau_0 \Big) \cap \Big( \bigcap_{(i, j) \in E_0^c} \lvert r_{ij} \rvert < \tau_0 \Big).
\end{equation}
Then by De Morgan's laws, we can say $\hat{E}(\tau_0) \neq E$ if and only if:
\begin{equation}
\Big( \bigcup_{(i, j) \in E_0} \lvert r_{ij} \rvert \leq \tau_0 \Big) \cup \Big( \bigcup_{(i, j) \in E_0^c} \lvert r_{ij} \rvert \geq \tau_0 \Big),
\end{equation}
which is to say that $\hat{E}(\tau_0) \neq E_0$ holds if and only if any $r_{ij}$ is on the opposite side of $\tau_0$ as their population analog $\rho_{ij}$.
From here, the strategy is to derive bounds for $\Prob(\lvert r_{ij} \rvert \leq \tau_0)$ if $(i, j) \in E_0$, and $\Prob(\lvert r_{ij} \rvert \geq \tau_0)$ if $(i, j) \in E_0^c$, for all $(i, j)$.
To determine these bounds, we make use of a concentration inequality for $\Prob( \lvert r_{ij} - \rho_{ij} \rvert \geq \epsilon)$ from Lemma 1 of \citet{Kalisch2007}.
We re-state this as follows:

\begin{lemma} \label{l_KB07}
Assuming $X_i$ and $X_j$ are Gaussian random variables with correlation $\lvert \rho_{ij} \rvert \leq M < 1$.
Let $r_{ij}$ be the sample correlation calculated from an i.i.d. sample of size $n$. Then for any $0 < \epsilon \leq 2$, 
\begin{equation} \label{e_kblemma}
\Prob( \lvert r_{ij} - \rho_{ij} \rvert \geq \epsilon) \leq C_0(n - 2)\left( \dfrac{4 - \epsilon^2}{4 + \epsilon^2} \right)^{n - 4},
\end{equation}
where $0 < C_0 < \infty$ only depends on $M$.
\end{lemma}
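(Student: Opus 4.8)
The plan is to work from the exact sampling density of the bivariate-normal sample correlation and to reduce the two-sided tail probability to a supremum of that density over the event $\{|r_{ij}-\rho_{ij}|\ge\epsilon\}$. Fix the pair and write $\rho=\rho_{ij}$, $r=r_{ij}$. The classical Fisher density of $r$ for an i.i.d. sample of size $n$ from a bivariate normal with correlation $\rho$ is
\begin{equation}
f_n(r;\rho)=c_n\,\frac{(1-\rho^2)^{(n-1)/2}(1-r^2)^{(n-4)/2}}{(1-\rho r)^{\,n-3/2}}\;{}_2F_1\!\Big(\tfrac12,\tfrac12;\,n-\tfrac12;\,\tfrac{1+\rho r}{2}\Big),
\end{equation}
where the leading factor $c_n=(n-2)\Gamma(n-1)\big/(\sqrt{2\pi}\,\Gamma(n-\tfrac12))$ depends on $n$ only and satisfies $c_n\le\kappa(n-2)$ for a universal $\kappa$. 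Because $\Prob(|r-\rho|\ge\epsilon)=\int_{\{|r-\rho|\ge\epsilon\}}f_n(r;\rho)\,dr$ and the region of integration has length at most $2$, it suffices to bound $\sup_{|r-\rho|\ge\epsilon}f_n(r;\rho)$ by a constant multiple of $(n-2)\big(\tfrac{4-\epsilon^2}{4+\epsilon^2}\big)^{n-4}$.

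The key observation is that the entire exponential-in-$n$ behavior of $f_n$ is carried by the single quantity $h(\rho,r)\coloneqq\sqrt{(1-\rho^2)(1-r^2)}\big/(1-\rho r)$. First I would factor out $h(\rho,r)^{\,n-4}$; a brief exponent computation leaves the residual $R(\rho,r)=(1-\rho^2)^{3/2}(1-\rho r)^{-5/2}$, so that $f_n(r;\rho)=c_n\,h(\rho,r)^{\,n-4}\,R(\rho,r)\,{}_2F_1(\cdots)$. The role of the hypothesis $|\rho|\le M<1$ is to bound the non-dominant factors by a constant depending on $M$ alone, uniformly in $n$: since $\rho r\le M$ over the whole range we have $1-\rho r\ge 1-M$, hence $R\le(1-M)^{-5/2}$; and the hypergeometric factor is controlled because its argument satisfies $\tfrac{1+\rho r}{2}\le\tfrac{1+M}{2}<1$, so its nonnegative series is dominated term-by-term (using $(\tfrac12)_k/(n-\tfrac12)_k\le1$) by $\sum_k z^k=(1-z)^{-1}\le 2/(1-M)$, independently of $n$. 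Collecting these gives $f_n(r;\rho)\le C_1\,(n-2)\,h(\rho,r)^{\,n-4}$ with $C_1=C_1(M)$.

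It then remains to prove the pointwise bound $\sup_{|r-\rho|\ge\epsilon}h(\rho,r)\le\tfrac{4-\epsilon^2}{4+\epsilon^2}$ over all valid $\rho,r\in(-1,1)$, and I expect this two-variable optimization to be the main obstacle. The easy half is that $h\le1$ with equality only on the diagonal $r=\rho$ (squaring reduces the inequality to $(\rho-r)^2\ge0$) and that, for fixed $\rho$, the only critical point of $r\mapsto h(\rho,r)$ is $r=\rho$, so $h$ decreases monotonically as $r$ leaves $\rho$ and the supremum over the tail is attained on the boundary $r=\rho\pm\epsilon$. The delicate half is the maximization along that boundary: setting $\psi(\rho)=h(\rho,\rho+\epsilon)^2$, one checks the reflection symmetry $\psi(\rho)=\psi(-\rho-\epsilon)$, which identifies $\rho=-\epsilon/2$ as the center of symmetry and the maximizer; there the point $r=\rho+\epsilon=\epsilon/2$ is its mirror image about the origin, and substituting yields $h=\tfrac{1-\epsilon^2/4}{1+\epsilon^2/4}=\tfrac{4-\epsilon^2}{4+\epsilon^2}$ exactly, with the same value on the $r=\rho-\epsilon$ branch by the symmetry $(\rho,r)\mapsto(-\rho,-r)$. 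Combining $f_n\le C_1(n-2)\big(\tfrac{4-\epsilon^2}{4+\epsilon^2}\big)^{n-4}$ with the length-$2$ integration gives $\Prob(|r-\rho|\ge\epsilon)\le 2C_1(n-2)\big(\tfrac{4-\epsilon^2}{4+\epsilon^2}\big)^{n-4}$, which is the claimed inequality with $C_0=2C_1$. The substantive work is therefore this optimization—confirming that $\rho=\mp\epsilon/2$ are the global maximizers and that $h$ has no interior critical point off the diagonal—together with the bookkeeping that keeps every non-dominant factor free of $n$ so that the prefactor remains $C_0(n-2)$.
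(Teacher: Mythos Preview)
The paper does not actually prove this lemma: it is quoted as Lemma~1 of \citet{Kalisch2007} and used as a black box inside the proof of Theorem~\ref{theo:consistency}. Your proposal supplies the argument that the paper omits, and it is essentially the proof given in that reference (which in turn rests on Hotelling's classical analysis of the exact density of the sample correlation). The main steps---the factorization $f_n=c_n\,h^{n-4}R\cdot{}_2F_1$, the $M$-dependent bounds on $R$ and on the hypergeometric factor via $(1+\rho r)/2\le(1+M)/2<1$, and the reduction of the tail supremum to the boundary $r=\rho\pm\epsilon$---are all correct.

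One point deserves a firmer justification: the reflection symmetry $\psi(\rho)=\psi(-\rho-\epsilon)$ only tells you that $\rho=-\epsilon/2$ is the center of symmetry, not that it is a maximizer rather than a minimizer. The quickest way to close this is the substitution $s=\rho+\epsilon/2$, $u=s^2\ge 0$, which yields after simplification $h(\rho,\rho+\epsilon)^2=1-\epsilon^2/(1-u+\epsilon^2/4)^2$; since $1-u+\epsilon^2/4>0$ on the admissible range, this is strictly decreasing in $u$, so $u=0$ (i.e.\ $\rho=-\epsilon/2$, $r=\epsilon/2$) is indeed the global maximizer, giving $h=(4-\epsilon^2)/(4+\epsilon^2)$ exactly. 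With that in place your argument is complete and matches the cited source.
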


For our purposes, we set $\epsilon = \gamma$ and select as $\tau_0$ the mid-point of $\min_{E_0}(\lvert \rho_{ij} \rvert)$ and $\max_{E_0^c}( \lvert \rho_{ij} \rvert)$, which will be the best choice to uniformly bound all $\Prob(\lvert r_{ij} \rvert \leq \tau_0)$ if $(i, j) \in E_0$ and $\Prob(\lvert r_{ij} \rvert \geq \tau_0)$ if $(i, j) \in E_0^c$.
The uniformity of the bound follows by seeing that $\gamma \leq \big\vert \lvert \rho_{ij} \rvert - \tau_0 \big\rvert$ for all $(i, j)$.
That is, there is no $\rho_{ij}$ that is closer to $\tau_0$ than the length of $\gamma$.

We begin with the scenario where $(i, j) \in E_0^c$.
Given the left-hand side of Equation~\eqref{e_kblemma} and setting $\epsilon = \gamma$, we have:
\begin{equation}
\begin{aligned}
\Prob( \lvert r_{ij} - \rho_{ij} \rvert \geq \gamma)
  &\geq \Prob( \lvert r_{ij} \rvert - \lvert \rho_{ij} \rvert \geq \gamma) \\
  &\geq \Prob( \lvert r_{ij} \rvert - \lvert \rho_{ij} \rvert \geq \tau_0 - \lvert \rho_{ij} \rvert) \\
  &= \Prob( \lvert r_{ij} \rvert \geq \tau_0).
\end{aligned}
\end{equation}
Hence, $\Prob( \lvert r_{ij} \rvert \geq \tau_0)$ is bounded from above by the right-hand side of Equation~\eqref{e_kblemma} if $(i, j) \in E_0^c$.
We can use the same strategy to conclude that, for $(i,j) \in E_0$,
\begin{equation}
\Prob( \lvert r_{ij} - \rho_{ij} \rvert \geq \gamma) \geq \Prob( \lvert r_{ij} \rvert \leq \tau_0).
\end{equation}

Since these two events have the same upper bound, let us combine them by defining:
\begin{equation}
 B_{ij} = B(r_{ij}, \tau_0) \coloneqq
  \begin{cases} 
      \lvert r_{ij} \rvert \leq \tau_0 & \text{if } (i, j) \in E_0 \\
      \lvert r_{ij} \rvert \geq \tau_0 & \text{if } (i, j) \in E_0^c
   \end{cases}.
\end{equation}
Noting that $\hat{E}(\tau_0) \neq E(\tau_0)$ holds if and only if $\bigcup_{(i, j)} B_{ij}$ holds, what remains is to find a bound of the latter event.
This can be done with the union bound:
\begin{equation} \label{eq:consistency_final_step}
\begin{aligned}
 \Prob(\hat{E}(\tau_0) \neq E(\tau_0))=\Prob \Big( \bigcup_{(i, j)} B_{ij} \Big) &\leq \sum_{(i, j)} \Prob \left(B_{ij} \right) \\
& \leq \dfrac{p(p - 1)}{2} \underset{(i, j)}{\max} \{ \Prob(B_{ij}) \}\\
 &\leq C p(p - 1) (n - 2)\left( \dfrac{4 - \gamma^2}{4 + \gamma^2} \right)^{n - 4},
\end{aligned}
\end{equation}
where $0 < C < \infty$ only depends on $M$.
This result follows by recognizing that all $\Prob(B_{ij})$ are uniformly bounded as in Lemma~\ref{l_KB07}.
Finally, this implies
\begin{equation}
\Prob(\hat{E}(\tau_0) = E_0) \geq 1 - C p(p - 1) (n - 2)\left( \dfrac{4 - \gamma^2}{4 + \gamma^2} \right)^{n - 4}
\end{equation}
and thus, \eqref{eq:support_bound} follows immediately under the unique child condition by Lemma~\ref{lem:structural_identifiability}.
\end{proof}

\subsection{Proof of Theorem~\ref{theo:hd_consistency}} \label{sec:proof_hds}

\begin{proof}[\unskip\nopunct]
To begin, we will first examine the growth of a lower bound of $\Prob(\hat{E}(\tau_0) = E_0)$ as a function of $n$.
Noting from Equation~\eqref{eq:consistency_bound}, an upper bound on the decaying term with $n$ can be derived as follows:
\begin{equation}
\begin{aligned}
\left( \dfrac{4 - \gamma^2}{4 + \gamma^2} \right)^{n - 4} &\leq \left( 1 - \dfrac{\gamma^2}{4} \right)^{n - 4} \\
&\leq \left( 1 - \dfrac{c_1}{4(n - 4)^b} \right)^{n - 4} \\
&= \left( 1 - \dfrac{c_1}{4(n - 4)^b} \right)^{(n - 4)^b(n - 4)^{1 - b}} \\
&= \left( \exp\left( -\dfrac{c_1}{4} \right) + o(1) \right)^{(n - 4)^{1 - b}} \\
&\leq \exp\left( -\dfrac{c_2(n - 4)^{1 - b}}{4} \right),
\end{aligned}
\end{equation}
where we used the limit $\lim_{x \to \infty} (1 + a/x)^x = \exp(a)$ and another constant $c_2 \in (0, c_1)$ such that the $o(1)$ remainder can be dropped.
From here, we can form a looser bound on Equation~\eqref{eq:consistency_bound} as
\begin{equation}
\begin{aligned}
\Prob(\hat{E}(\tau_0) = E_0) &\geq 1 - C p(p - 1) (n - 2)\left( \dfrac{4 - \gamma^2}{4 + \gamma^2} \right)^{n - 4} \\
&\geq 1 - C p_n(p_n - 1) (n - 2) \exp\left( -\dfrac{c_2(n - 4)^{1 - b}}{4} \right) \\
&= 1 - p(n)f(n),
\end{aligned}
\end{equation}
where $p(n) =  p_n(p_n - 1)$ and $f(n) = (n - 2) \exp( -c_2(n - 4)^{1 - b}/4)$.
Therefore, we have consistency if $\lim_{n \to \infty} p(n)f(b) = 0$ or if $p(n) = o(1/f(n))$.
Comparing the dominating terms of $p(n)$ and $1/f(n)$, consistency is achieved if
\begin{equation}
\begin{aligned}
p^2_n &= o\left( \exp\left[ \dfrac{c_2(n - 4)^{1 - b}}{4}-\log n \right] \right)\\
\text{or if } p_n &= o\left( \exp\left[c(n - 4)^{1 - b} \right] \right),
\end{aligned}
\end{equation}
by choosing a positive constant $c < c_2/8$.
\end{proof}

\newpage

\section{Supplementary Details for Simulation Studies and Real Data Application} \label{supp:simulation}

\subsection{Simulation settings}

The simulations were done in the \texttt{R} language \citep[4.0.2;][]{R2020}.
The \texttt{lavaan} package \citep{Rosseel2012} was used in the estimation phases of the CT algorithm (Step~10), and was used to estimate the baseline MLE solution.
For the cutoffs $\tau_k$, 40 equidistant points from 0 to 1 were input for the CT algorithm.
For EFA, the \texttt{psych} package \citep{Revelle2019} was used to obtain MLE solutions for unconstrained $\Lambda$.
We left the rotation option to the package default oblimin method \citep{Crawford_1975}, however we note that the rotation choice does not affect the results since we will only be examining the likelihood of $\Sigma(\hat{\theta})$.
And finally, the LASSO and MCP variants of EFA were estimated with the \texttt{fanc} package \citep{Hirose2014a, Hirose2014b}.
The tuning parameters were left at the package defaults of 30 values for a single tuning parameter in LASSO and 270 combinations of two tuning parameters in MCP.
For estimating the number of latent factors in these methods, the number of non-zero columns in $\hat{\Lambda}$ was taken as $\hat{d}$ as they would serve as the de facto number of latent variables \citep{Caner2014}.

To generate $\Phi$, we began by setting its diagonals to one.
Then for the off-diagonal elements, we drew a $d\times d$ matrix $A$ with entries from $\text{Uniform}(0, 1)$ and rescaled it such that $A^TA$ had off-diagonals in the range of $[0.6, 0.8]$, the range of $\lambda_{ij}$.
Then the off-diagonals of $\Phi$ were set to the off-diagonals of this rescaled $A^TA$, which ensured $\Phi$ would be positive definite.
Then depending on the condition for the magnitude of $\Phi$, the off-diagonals were scaled by 0.25 for the low $\Phi$ condition and by 0.75 for the high $\Phi$ condition.

For the data sets that violate the unique child condition, we began with an independent cluster structure (one non-zero entry per row), for which the unique child condition trivially holds for every latent variable.
We will call these latent variables the \textit{main parent} of these observed variables.
To isolate the effect of the unique child condition from that of thresholdability, we ensured thresholdability was always met in the population by setting $\Phi = I_d$ (Corollary~\ref{cor:orthogonal_factors}).
Then we chose 75\% of the latent variables at random to have no unique children.
If a latent variable was deemed to have no unique children, we generated an extra path between all the children of this latent variable to another random latent variable.
We will call these parents the \textit{extra parent}.

For each $X_j$, we drew an $R^2 \sim \text{Uniform}(0.36, 0.64)$ as the proportion of variance in $X_j$ explained by $L$.
The range of (0.36, 0.64) is analogous to the range of path coefficients we were using in previous simulations which was (0.6, 0.8).
If a given $X_j$ only had a main parent and no extra parent, then that $X_j$ had a single path coefficient of $\sqrt{R^2}$ from its main parent.
However, if a given $X_j$ also had an extra parent, then the $R^2$ was split using a 5:1 ratio between the main parent and the extra parent, and the path coefficients were calculated to reflect this accordingly.

\subsection{Evaluation Metrics} \label{sec:outcomes}

To compare the estimated and true supports ($\mathcal{A}(\hat{\Lambda})$ vs. $\mathcal{A}(\Lambda)$) we computed the minimum HD over all column permutations of $\hat{\Lambda}$.
That is, we define an HD as
\begin{equation} \label{eq:hd}
\text{HD} \coloneqq \min_{P} \left[ \lvert \mathcal{A}(\hat{\Lambda} P) \, \triangle \, \mathcal{A}(\Lambda) \rvert \right],
\end{equation}
where $\triangle$ is the symmetric difference or disjunctive union between two sets.
The permutation matrix $P$ reconciles the fact that the column order of $\hat{\Lambda}$ may not be the same as the column order of $\Lambda$, and that $\hat{d}$ may not be the same as $d$.
Put another way, HD is the smallest number of element additions and deletions needed to make the sets $\mathcal{A}(\Lambda)$ and $\mathcal{A}(\hat{\Lambda})$ identical, among all column permutations of $\hat{\Lambda}$.

In addition to HD, we also report the $F_1$ score, a normed measure of classification.
This allows for comparability between models with differing dimensions of $\Lambda$, that is differing $p$ and $d$.
Note that the $F_1$ score is simply the harmonic mean between precision and recall.
Once again using a permutation matrices to reconcile different orderings of $L$, we have
\begin{equation} \label{eq:f_1}
F_1(\hat{\Lambda}) \coloneqq \max_{P} \left[ \dfrac{2 \lvert \mathcal{A}(\hat{\Lambda}P) \cap \mathcal{A}(\Lambda) \rvert}{2 \lvert \mathcal{A}(\hat{\Lambda}P) \cap \mathcal{A}(\Lambda) \rvert + \lvert \mathcal{A}(\hat{\Lambda}P) \, \triangle \, \mathcal{A}(\Lambda) \rvert} \right] \in [0,1],
\end{equation}
and the higher the $F_1$ score, the more accurate the estimated support of $\hat{\Lambda}$.

\newpage

\section{Additional Figures} \label{supp:figures}

\begin{figure*}[!h]
\centering
\includegraphics[width=0.9\textwidth, keepaspectratio]{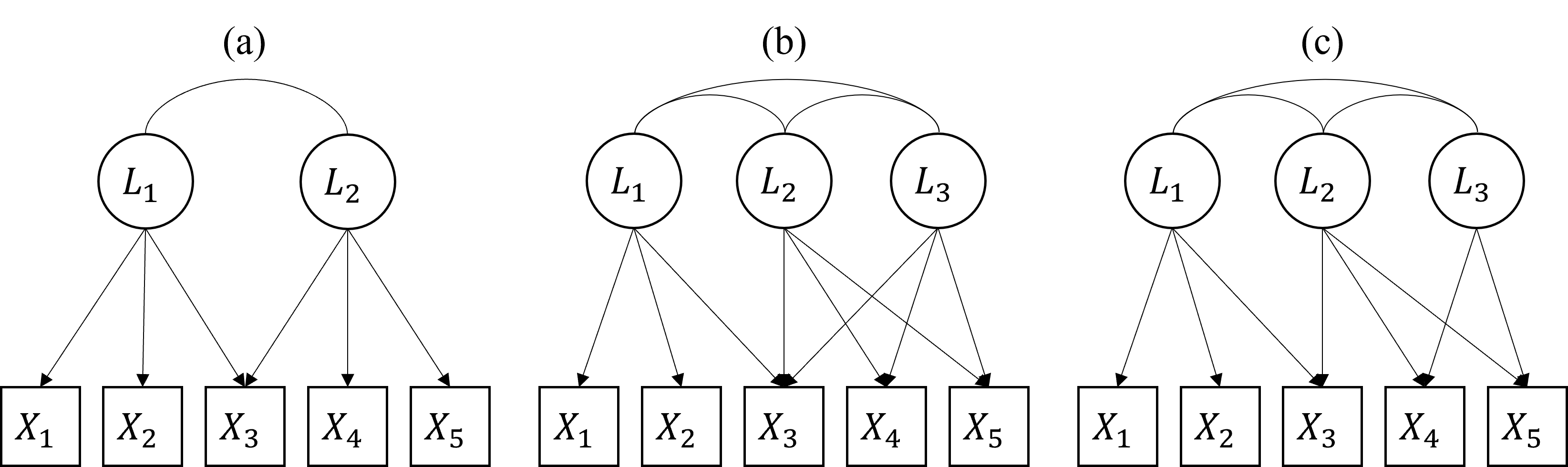}\\
\caption{Example structures that may be precluded from the unique child condition. Figure (a) is a structure that meets the unique child condition, Figure (b) shows a structure where the children sets of $L_2$ and $L_3$ are identical, and Figure (c) shows a structure where the children set of $L_3$ is a proper subset of $L_2$. All three structures lead to the same thresholded correlation graph, and will contain two independent maximal cliques.}
\label{fig:ucc_counterexamples}
\end{figure*}

\begin{figure*}[!h]
\centering
\includegraphics[width=0.9\textwidth, keepaspectratio]{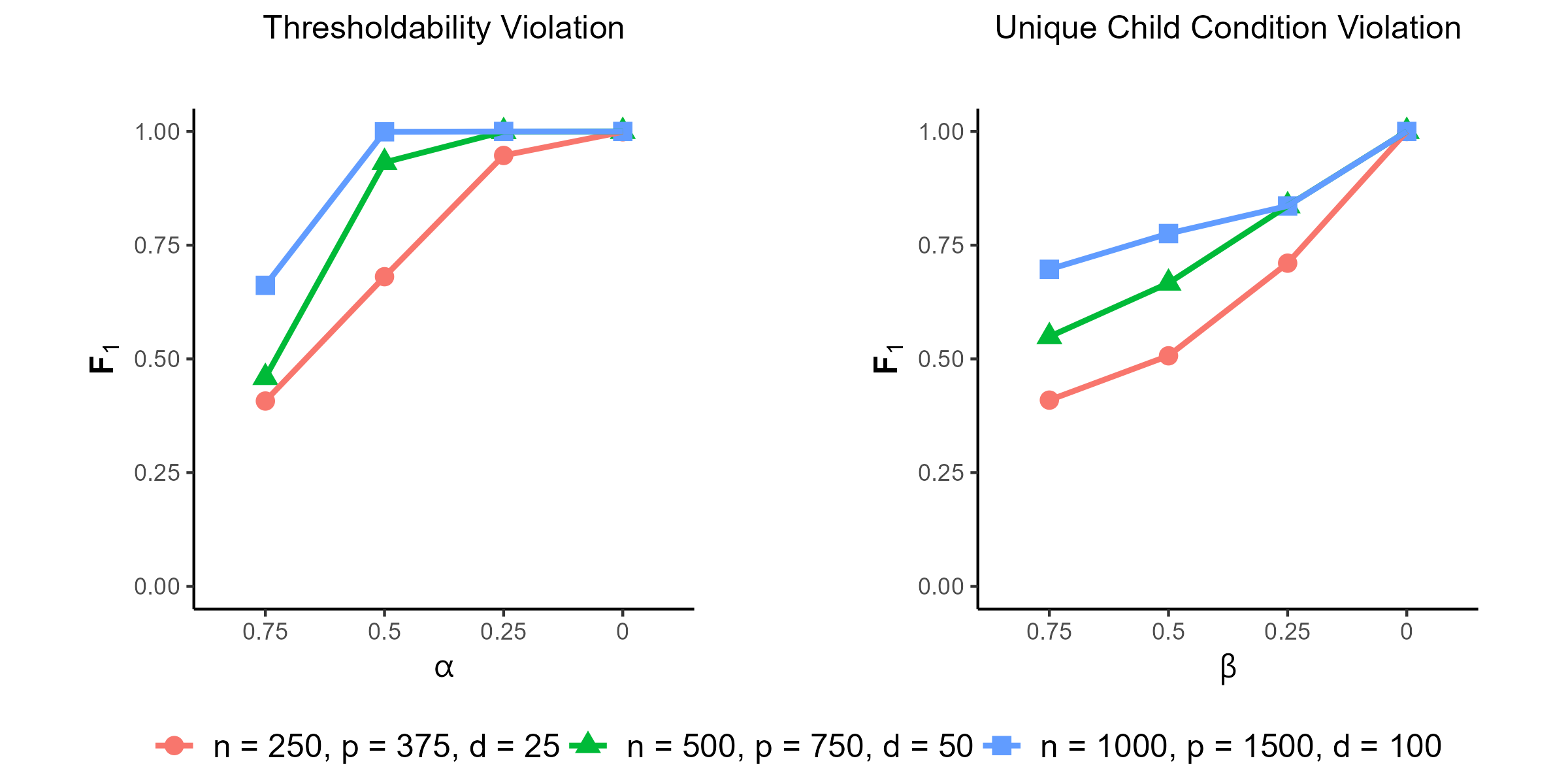}\\
\caption{Trends for $F_1$ score when each of the assumptions are violated to varying degrees in the high-dimensional setting. Thresholdability was varied via the scaling factor $\alpha \in [0, 0.75]$ on the off-diagonals in $\Phi$. The proportion of latent parents without unique children is represented by $\beta \in [0, 0.75]$.}
\label{fig:thresh_ucc_range}
\end{figure*}

\begin{figure*}[!h]
\centering
\includegraphics[width=0.9\textwidth, keepaspectratio]{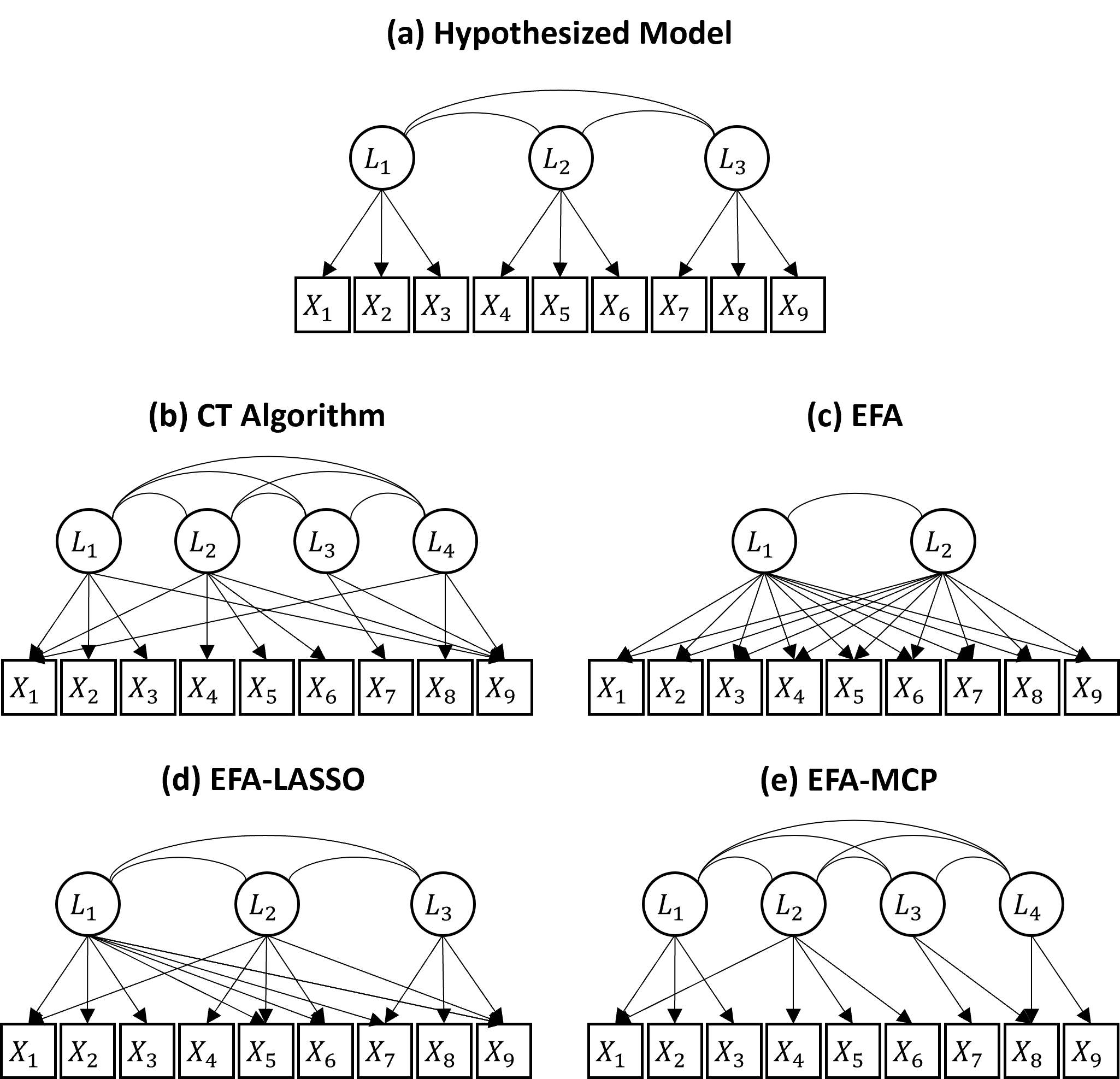}\\
\caption{The hypothesized and estimated model structures by each method in the real data example. Variables $X_1$, $X_2$, and $X_3$ were visual perception tasks, variables $X_4$, $X_5$, and $X_6$ were verbal/reading tasks, and variables $X_7$, $X_8$, and $X_9$ were speed tests.}
\label{fig:holzinger}
\end{figure*}


\end{document}